\newcommand{\Mat}{\mathtt{Mat}}
\newcommand{\mso}{\mathfrak{so}}
\newcommand{\nn}{\nonumber}
\newcommand{\cHbold}{\boldsymbol{\mathcal{H}}}
\tikzset{>=latex}
\newcommand{\sQ}{\mathsf{Q}}
\newcommand{\sP}{\mathsf{P}}
\newcommand{\sW}{\mathsf{A}}
\newcommand{\uomega}{\underline{\omega}}
\newcommand{\usC}{\underline{\mathsf{C}}}
\newcommand{\tp}{\mathtt{p}}
\newcommand{\tH}{\mathtt{H}}
\newcommand{\K}{\mathbb{K}}
\newcommand{\sL}{\mathsf{L}}
\newcommand{\Omegabold}{\boldsymbol\Omega}
\newcommand{\gbold}{\boldsymbol{g}}
\newcommand{\msp}{\mathfrak{sp}}
\newcommand{\eps}{\epsilon}
\newcommand{\sF}{\mathsf{F}}
\newcommand{\sC}{\mathsf{C}}
\newcommand{\varpibold}{\boldsymbol{\varpi}}
\newcommand{\ebold}{\boldsymbol{e}}
\newcommand{\Hbold}{\boldsymbol{H}}
\newcommand{\Rbold}{\boldsymbol{R}}
\title{\Large Self-dual pp-wave solutions in chiral higher-spin gravity}
\author[1]{Tung Tran}
\affiliation[1]{Asia Pacific Center for Theoretical Physics, POSTECH, 77 Cheongamro, Nam-gu,
Pohang-si, Gyeongsangbuk-do, 37673, Korea}
\emailAdd{tran.tung@apctp.org}
\abstract{We show that chiral higher-spin gravity with a vanishing cosmological constant  admits a class of exact self-dual pp-wave solutions derived from harmonic scalar functions and two principal spinors. These solutions satisfy both the linear and non-linear equations of motion, as they annihilate all higher-order vertices, leading to the equations of motion for free fields on a self-dual background sourced by a positive-helicity spin-2 field. Our method employs a simple Kerr-Schild ansatz for positive-helicity chiral higher-spin fields 
adapted to the self-dual gravity framework.}
\begin{document}

\maketitle
  

\section{Introduction}

By extending the usual GR with massless higher-spin fields, one can enlarge the symmetries of gravity by those associated with higher-spin fields, see e.g. \cite{Bekaert:2022poo} for an overview. This led to various higher-spin gravities (HSGRA)s, which serve as simple toy models toward a theory of quantum gravity, one that is presumably UV-finite and free of black hole-like singularities. 

Intriguingly, to construct theories of interacting massless higher-spin fields, one often needs to relax some of the usual assumptions underlying field theories such as unitarity or 
parity invariance, which, to some extent, is not a surprise fact (see discussion in \cite{Weinberg:1964ew,Coleman:1967ad,Maldacena:2011jn,Bekaert:2015tva,Sleight:2017pcz,Tran:2022amg,Neiman:2023orj}).

At the moment, QFT-compatible HSGRAs with propagating degrees of freedom are either higher-spin extension of (self-dual) Weyl gravity \cite{Segal:2002gd,Tseytlin:2002gz,Bekaert:2010ky,Haehnel:2016mlb,Adamo:2016ple,Basile:2022nou}, or chiral higher-spin theories, see e.g. \cite{Metsaev:1991mt,Metsaev:1991nb,Ponomarev:2016lrm,Metsaev:2018xip,Skvortsov:2018uru,Metsaev:2019dqt,Metsaev:2019aig,Tsulaia:2022csz}. There are also in $3d$ HSGRA \cite{Blencowe:1988gj, Bergshoeff:1989ns, Campoleoni:2010zq, Henneaux:2010xg, Pope:1989vj, Fradkin:1989xt, Grigoriev:2019xmp, Grigoriev:2020lzu,Sharapov:2024euk} and a 
higher-spin extension of Jackiw-Teitelboim gravity in $2d$ \cite{Alkalaev:2019xuv}. Another interesting example is the higher-spin gauge theory induced by the IKKT matrix model on an FLRW cosmological background with a truncated higher-spin spectrum \cite{Sperling:2019xar,Steinacker:2021yxt}.\footnote{See e.g. \cite{Steinacker:2023cuf,Steinacker:2024huv,Manta:2024vol} and references therein for the recent development of this model.}

In this work, we focus on chiral higher-spin gravity and explore the space of its exact solutions to examine how much they differ from GR.\footnote{See \cite{Skvortsov:2024rng} for the lower-spin BPST solution in chiral HSGRA.} Note that one of the key features of chiral HSGRA is that its free action can be defined on any self-dual background 
\cite{Krasnov:2021nsq}, i.e. background with Euclidean or split signature where half of the component of the Weyl tensor (the anti self-dual part) is zero. However, whether this holds with all interactions present remains unclear, making verification essential. 

Henceforth, we work with a complexified setting, which provides more flexibility in doing computation. This philosophy, which has deep roots in twistor theory \cite{Penrose:1967wn}, is what we want to employ in this paper in finding exact solutions of chiral HSGRA using its equations of motion \cite{Sharapov:2022faa,Sharapov:2022awp}.

The rest of the paper is organized as follows:
\begin{itemize}
    \item[-] In Section \ref{sec:review}, we present crucial information, including the spectrum and explicit forms of local vertices, to study chiral higher-spin gravity (HSGRA) through its equations of motion derived from certain minimal model with local $L_{\infty}$-structures of a certain $Q$-manifold \cite{sullivan1977infinitesimal,Alexandrov:1995kv}. 
    \item[-] Section \ref{sec:chiral-FDA} studies the exact solutions of chiral HSGRA sourced by a set of harmonic scalar functions $\{\tH_s\}$ and the principal spinors $(\lambda,\tilde\lambda)$ \`a la Newman-Penrose. Since all local vertices of chiral HSGRA are smooth in the cosmological constant $\Lambda$, as shown in \cite{Sharapov:2022awp}, 
    it seems reasonable to first explore the case $\Lambda=0$. 
    This is exactly what we do in this paper. 
    
    The spin-2 sector of the theory needs some special attention, since it generates the proposed self-dual pp-wave background away from a flat background. In particular, we will adapt the Cartan structure equations to the context of self-dual GR as discussed in \cite{Krasnov:2021cva}, which admits a smooth deformation from flat space to (A)dS. By considering an appropriate harmonic ansatz for fluctuation, akin to those in \cite{Adamo:2022lah,Neiman:2023bkq}, we show that the spin-2 sector of chiral HSGRA indeed admit a class of exact solutions, which solve the full non-linear EoMs, for any harmonic profile function $\tH_2$. From here, the construction for exact higher-spin solutions is completely analogous. Note that our solution has a simple form, and it makes most of the structure maps of the underlying $L_{\infty}$-algebra 
    vanish, see earlier discussion in \cite{Sharapov:2022faa}. Moreover, it is reasonable to speculate that once the space of solutions is found, it should be `portable' to other nearby local higher-spin theories \cite{Ponomarev:2017nrr,Monteiro:2022xwq} as well. 

    In the concluding part of Section \ref{sec:chiral-FDA}, we also derive the effective action corresponding to our exact solutions and show that this action indeed reduces to the standard kinetic term for free spinning fields propagating on the proposed self-dual pp-wave background. That is, despite having non-trivial interactions, the theory appears to be free when the scattering data are those obtained exact solutions.

    \item[-] We end with a discussion in Section \ref{sec:discussion}. There are also appendices which the reader can refer to in due time.
\end{itemize}

\paragraph{Notation.} We use lowercase Roman letters $a,b=1,2,3,4$ to denote tangent indices, while uppercase letters $A,B=1,2,3,4$ will be used for twistor or $\msp(4)$-indices. The Greek letters $\alpha,\beta=0,1$ and $\dot\alpha,\dot\beta=\dot 0,\dot 1$ are used to for spinorial indices. Symmetrized indices are denoted by the same letters, e.g. $A_{a}B_{a}=\frac{1}{2}(A_{a_1}B_{a_2}+A_{a_2}B_{a_1})$ while fully symmetric rank-s tensor will be denoted by $T_{a(s)}=T_{a_1\ldots a_s}$.

\section{Review}\label{sec:review}


This section provides the mathematical background used to construct explicit representations of chiral HSGRA with vanishing cosmological constant. We shall be brief and refer the reader to \cite{Sharapov:2022faa,Sharapov:2022awp} for the details and to \cite{Sharapov:2022wpz,Sharapov:2022nps} for a recent development.

\subsection{Free differential algebra}\label{sec:FDA}

Let $\cM$ be a $4d$ spacetime manifold and $C^{\infty}(T[1]\cM)\equiv \Omega^{\bullet}(\cM)$ be its associated graded commutative algebra of differential forms endowed with the exterior differential $d$, which squares to zero. Then, the pair $(T[1]\cM,Q_{\cM}\equiv d)$ present the simplest example of a $Q$-manifold.

Now, consider differential-preserving maps $\Phi^A(x,dx):T[1]\cM\rightarrow \cN$ from the source $T[1]\cM$ to another target $Q$-manifold $(\cN,Q_{\cN})$, which play the role of fields on $\cN$. 
Then, around a stationary point $\tp\in \cN$, where $Q_{\cN}(\tp)=0$, the associated set of pde. for $\Phi^A$ takes the form
\begin{align}\label{dPhi=QPhi}
    d \Phi^A &= Q_{\cN}^A(\Phi)\,.
\end{align}
This is also known as free differential algebra (FDA) \cite{sullivan1977infinitesimal,van2005free,DAuria:1980cmy,Vasiliev:1988sa}. The nilpotency of $Q_{\cN}\equiv Q$ is equivalent to the statement:
\begin{align}\label{QQ=0}
    Q^B\frac{\p}{\p \Phi^B}Q^A=0\qquad \Leftrightarrow \qquad \sum \pm Q^{B}{}_{M_1\ldots M_n}Q^{A}{}_{N_1\ldots B\ldots N_k}=0\,.
\end{align}
This gives us an $L_{\infty}$-algebra \cite{Alexandrov:1995kv,stasheff2006homotopy} whose $L_{\infty}$-structure maps are the Taylor's coefficients $Q^B_{M_1\ldots N_n}$.

In the simplest setting, which works for generic gauge theories and gravity, we can choose $\Phi(x)=(\omega(x),\sC(x))$ to be the graded coordinates of $\cN$. Here, $\omega$ is a gauge field of degree one taking values in some Lie algebra $\mathfrak{g}$, and $\sC$ is a `matter' field of degree zero taking values in some representation $\varrho(\mathfrak{g})$ of $\mathfrak{g}$. Then, the equations of motion resulting from \eqref{dPhi=QPhi} with the above field content, in general, are 
\begin{subequations}\label{eq:FDA}
\begin{align} d\omega&=\cV(\omega,\omega)+\cV(\omega,\omega,\sC)+\cV(\omega,\omega,\sC,\sC)+\ldots\,,\\
    d \sC&=\cU(\omega,\sC)+\cU(\omega,\sC,\sC)+\cU(\omega,\sC,\sC,\sC)+\ldots\,.
\end{align}
\end{subequations}
where the notations $(\cV,\cU)$ are obvious. To sum things up, we can define a theory whenever we can specify $\{(\omega,\sC)\}$ and $\{\cV_n(\omega,\omega,\sC,\ldots)\,,\cU_n(\omega,\sC,\ldots)\}$.\footnote{The subscripts of $\cV$ and $\cU$ denote the number of arguments that enter the corresponding vertices.}

Note that \eqref{eq:FDA} can be referred to as a deformation of the following differential graded Lie algebra (dgLa)
\begin{align}\label{eq:linear-FDA}
    d\omega=\frac{1}{2}[\omega,\omega]\,,\qquad d \sC=\rho(\omega)\sC\,,
\end{align}
where now $\rho(\omega)$ denotes some appropriate representation of $\omega$. 
The first equation is simply the definition of a flat connection associated to the one-form $\omega$, while the second expresses the covariant constancy of the 0-form $\sC$ wrt. the flat connection.

\subsection{Chiral HSGRA}
\subsubsection{Chiral FDA data}
The local isomorphism $SO(4,\C)\simeq SL(2,\C)\times SL(2,\C)$ in $4d$ complexified spacetime allows us to write any tensorial fields in terms of spin-tensor fields $T_{\alpha(m)\,\dot\alpha(n)}$.

In the case of chiral HSGRA, $(\omega,\sC)$ can be written in terms of generating functions as
\small
\begin{subequations}\label{eq:generatingfunctions}
\begin{align}
    \text{1-form}&:  &\omega&=\sum_s\frac{1}{(2s)!}\omega_{A(2s)}Y^{A(2s)}=\sum_{m,n\geq 0}\frac{1}{m!n!}\omega_{\alpha(m)\,\dot\alpha(n)}\, y^{\alpha(m)}\tilde y^{\dot\alpha(n)}\,, \qquad m+n\in 2\N\,,\\
    \text{0-form}&:  &\sC&=\sum_s\frac{1}{(2s)!}\sC_{A(2s)}Y^{A(2s)}=\sum_{m,n\geq 0}\frac{1}{m!n!}\sC_{\alpha(m)\,\dot\alpha(n)}\,y^{\alpha(m)}\tilde y^{\dot\alpha(n)}\,, \qquad m+n\in 2\N\,.
\end{align}
\end{subequations}
\normalsize
where $Y^A=(y^{\alpha},\tilde y^{\dot\alpha})$. Here, $y^{\alpha},\tilde y^{\dot\alpha}$ are commutative spinorial variables, and $y^{\alpha(n)}=y^{\alpha_1}\ldots y^{\alpha_n}$ etc. is our standard abbreviation. The dynamical bosonic fields, which give us free action, are singled out as
\begin{subequations}\label{eq:generatingAB}
    \begin{align}
   \text{helicity $+s$}&: \qquad &\cA:&=\sum_{s\in \N}\frac{1}{(2s)!}\cA_{\alpha(2s)}y^{\alpha(2s)}=\omega(y,\tilde y=0)\,,\\ 
   \text{helicity $-s$}&:\qquad &\cB:&=\sum_{s\in \N}\frac{1}{(2s)!}B_{\alpha(2s)}y^{\alpha(2s)}=\sC(y,\tilde y=0)\,,\\
   \text{helicity  \ \, $0$}&:\qquad &\varphi:&=\sC(y=0,\tilde y=0)\,.
\end{align}
\end{subequations}
Using $B^{\alpha(2s)}$ and $\cA_{\alpha(2s-2)}$ one can write down the free action for chiral spinning fields as \cite{Krasnov:2021nsq}
\begin{align}\label{eq:free action}
    S=\int B^{\alpha(2s)}\wedge H_{\alpha\alpha}\nabla \cA_{\alpha(2s-2)}\,,\qquad H_{\alpha\alpha}:=h_{\alpha\dot\gamma}\wedge h_{\alpha}{}^{\dot\gamma}\,.
\end{align}
In the above, $h_{\alpha\dot\alpha}$ is the background vierbein of any SD spacetime where the ASD component $C_{\alpha(4)}$ of the Weyl tensor vanish, and $\nabla$ is the associated self-dual connection. This action is invariant on any self-dual background (with or without the cosmological constant) under $\delta \cA_{\alpha(2s-2)}=\nabla \xi_{\alpha(2s-2)}+h_{\alpha}{}^{\dot\beta}\eta_{\alpha(2s-3)\,\dot\beta}$, which can be checked immediately by noting that $\nabla^2\xi_{\alpha(2s-2)}=H_{\alpha}{}^{\beta}\xi_{\alpha(2s-3)\beta}$ and $H_{\alpha\alpha}\wedge h_{\alpha}{}^{\dot\beta}=0$. 

As explained in \cite{Krasnov:2021nsq}, the physical one-form $\cA_{\alpha(2s-2)}\equiv \omega_{\alpha(2s-2)}$ can be written as
\begin{align}
    \omega_{\alpha(2s-2)}=h^{\beta\dot\beta}A_{\alpha(2s-2)\beta\,\dot\beta}+h_{\alpha}{}^{\dot\alpha}\vartheta_{\alpha(2s-3)\,\dot\alpha}\,.
\end{align}
Since we can use the gauge transformation $\delta_{\eta} \omega_{\alpha(2s-2)}=h_{\alpha}{}^{\dot\alpha}\eta_{\alpha(2s-3)\,\dot\alpha}$ to gauge away the second component $\vartheta$ in $\omega^{\alpha(2s-2)}$, we will simply consider for most of the time 
\begin{align}\label{eq:simplify-omega}
    \omega_{\alpha(2s-2)}=h^{\beta\dot\beta}A_{\alpha(2s-2)\beta\,\dot\beta}.
\end{align}
Now, at the free level the FDA for chiral HSGRA for $\Lambda=0$ is simple
\begin{subequations}\label{linear-FDA-general}
    \begin{align}
        \nabla\omega_{\alpha(2s-2-k)\,\dot\alpha(k)}&=h_{\alpha}{}^{\dot\beta}\omega_{\alpha(2s-3-k)\,\dot\alpha(k)\dot\beta}\,,\qquad \qquad \quad 0\leq k\leq 2s-2\,,\\
        \nabla\omega_{\dot\alpha(2s-2)}&=\widetilde H^{\dot\beta\dot\beta}C_{\dot\alpha(2s-2)\dot\beta\dot\beta}\,,\qquad \qquad \qquad \widetilde H^{\dot\alpha\dot\alpha}:=h_{\gamma}{}^{\dot\alpha}\wedge h^{\gamma\dot\alpha}\,,\\
        \nabla C_{\alpha(k)\,\dot\alpha(2s+k)}&=h^{\beta\dot\beta}C_{\alpha(k)\beta\,\dot\alpha(2s+k)\dot\beta}\,,\\
        \nabla B^{\alpha(2s+k)\,\dot\alpha(k)}&=h_{\beta\dot\beta}B^{\alpha(2s+k)\beta\,\dot\alpha(k)\dot\beta}\,,
    \end{align}
\end{subequations}
where $C_{\alpha(k)\,\dot\alpha(2s+k)}$ are the $\nabla$-generated auxiliary fields of $C_{\dot\alpha(2s)}$, and $B^{\alpha(2s+k)\,\dot\alpha(k)}$ are the $\nabla$-generated auxiliary fields of $B^{\alpha(2s)}$, cf. \eqref{eq:generatingAB}.

Since the next (auxiliary) fields can be generated by the previous (auxiliary) fields by acting $\nabla$ on the physical fields again and again, the whole spectrum of fields can be obtained starting from the \emph{chiral fields} $\cA_{\alpha(2s-2)}$ and $B^{\alpha(2s)}$ and the scalar field $\varphi$ (see Fig. \ref{chiral-spectrum})\,. 
\begin{figure}[ht!]
    \centering
    \includegraphics[scale=0.42]{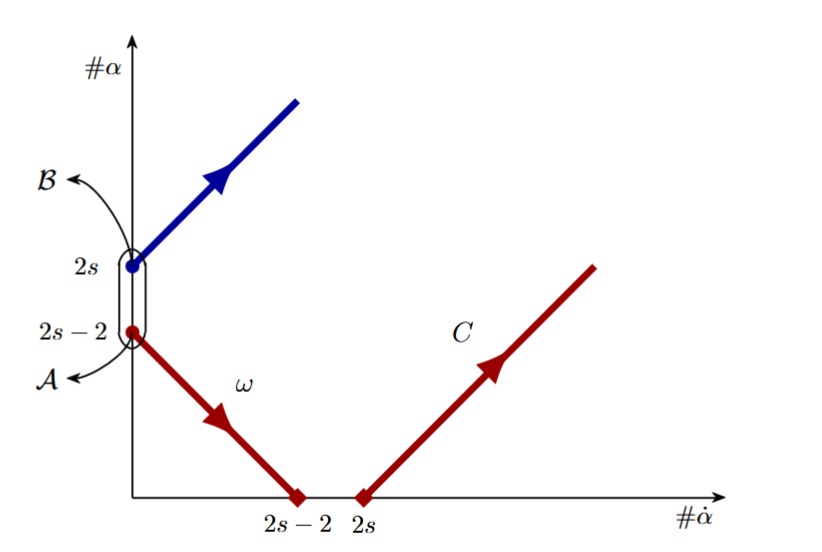}
    \caption{The horizontal/vertices axes represent the number of dotted and un-dotted spinorial indices that a tensorial field has. Here, [red] are fields with positive helicity, and [blue] are fields with negative helicity. The arrows indicate the directions in which auxiliary fields will be generated by acting $\nabla$ on the previous ones. All fields generated this way are referred to as chiral FDA data.}
    \label{chiral-spectrum}
\end{figure}

\subsubsection{Vertices}
The next ingredients we need are the structure maps/vertices $(\cV,\cU)$ in \eqref{eq:FDA}. Conveniently, we can write all vertices of chiral HSGRA as poly-differential operators:
\begin{align}\label{polydiff}
   \sF(\Phi_1,\ldots,\Phi_n):=\sF(y,\tilde y,\p_{i\,\alpha},\tilde\p_{i\,\dot\alpha})\Phi_1(y_1,\tilde y_1)\ldots \Phi_n(y_n,\tilde y_n)\Big|_{y_i,\tilde y_i=0}\,,\quad \Phi=(\omega,\sC)\,.
\end{align}
For conciseness, we will suppress the notation $\big|_{y_{i}=0,\,\tilde y_i=0}$ at the end of each expression. Here, $\sC=(C,B)$ is the master 0-form field, which contains both negative-helicity $B$'s and positive-helicity $C$'s fields and the scalar field $\varphi$. 

Below, we present some of the key features of these structure maps. The technical details in deriving these maps can be found in \cite{Sharapov:2022faa,Sharapov:2022awp} (see also Appendix \ref{app:A} for a recap).


\paragraph{$\star$-product} The first structure map in the minimal model of interest is the binary structure map $\sF_2$. It is given by a graded associative star-product:
\begin{align}\label{eq:star-product}
    \Phi_1\star \Phi_2&= \exp\Big(Y^A\frac{\p}{\p Y_1^A}+Y^A\frac{\p}{\p Y_2^A}-I^{AB}\frac{\p}{\p Y_1^A}\frac{\p}{\p Y_2^B}\Big)\Phi_1(Y_1) \Phi_2(Y_2)\nn\\
    &=\exp\Big(\langle y\,\p_{1}\rangle+\langle y\,\p_{2}\rangle+\Lambda \langle\p_1\,\p_{2}\rangle+[ y\,\p_{1}]+[y\,\p_{2}]+[\p_1\,\p_2]\Big)\Phi_1\Phi_2\,,
\end{align}
where $\Lambda$ is the cosmological constant, $I^{AB}$ is the infinity twistor \cite{Penrose:1967wn} obeying
\begin{align}
    I^{AB}=\begin{pmatrix} \Lambda \epsilon^{\alpha\beta} & 0 \\
    0 & \epsilon^{\dot\alpha\dot\beta}\end{pmatrix}\,,\qquad I_{AB}=\begin{pmatrix} \epsilon_{\alpha\beta} & 0 \\
    0 & \Lambda \epsilon_{\dot\alpha\dot\beta}\end{pmatrix}\,,\qquad I_{AC}I^{BC}=\Lambda \delta_A{}^B\,.
\end{align}
Note that 
\begin{align}
    \langle y \,\p_i\rangle = y^{\alpha}\p_{i\alpha}\,,\quad \langle \p_i\,\p_j\rangle=\p_i^{\alpha}\p_{j\alpha}\,; \qquad  [y\, \p_i]=\tilde y^{\dot\alpha}\tilde\p_{i\dot\alpha}\,,\quad [\p_i\,\p_j]=\tilde\p_{i}^{\dot \alpha}\tilde\p_{j\dot \alpha}\,,
\end{align}
where $\p_{i\alpha}=\frac{\p}{\p y_i^{\alpha}}$, $\tilde\p_{i\dot\alpha}=\frac{\p}{\p \tilde y_i^{\dot\alpha}}$. Here, $e^{\langle y\,\p_i\rangle}$ may be referred to as a shift operator with the property $e^{\langle y\,\p_i\rangle}\Phi(y_i)=\Phi(y_i+y)$. We will use the following convention to raise and lower spinorial indices
\begin{align}
    v^{\alpha}=v_{\beta}\epsilon^{\alpha\beta}\,,\quad v_{\alpha}=v^{\beta}\epsilon_{\beta\alpha}\,,\quad v^{\dot\alpha}=v_{\dot\beta}\epsilon^{\dot\alpha\dot\beta}\,,\quad v_{\dot\alpha}=v^{\dot\beta}\epsilon_{\dot\beta\dot\alpha}\,.
\end{align}
The Weyl algebra $\sW_2$ of polynomial functions in $Y^A=(y^{\alpha},\tilde y^{\dot\alpha})$ defined in terms of the $\star$-product cf. \eqref{eq:star-product} admits the decomposition $\sW_2=\sW_1(y)\otimes \widetilde\sW_1(\tilde y)$ where the subscript denotes the number of canonical pairs. 
We then define
\begin{align}
    \hs_{\Lambda}:=\sW_1\otimes \big(\widetilde \sW_1\otimes \Mat(n,\K)\big)\,,
\end{align}
to be the underlying higher-spin algebra of the system. Note that when $\Lambda=0$, the Poisson structure defined by the infinity twistor is degenerate, and $\sW_1$ becomes a commutative algebra while $\widetilde\sW_1\otimes \Mat(n,\K)$ remains to be an non-commutative (NC) algebra. This is the setting that we consider in this work.





\paragraph{Higher-order vertices.} The details of deriving higher-order vertices for chiral FDA, cf. \eqref{eq:FDA}, can be found in e.g. \cite{Sharapov:2022nps}. Here, we shall emphasize some of the main points, which are relevant for our analysis in Section \ref{sec:chiral-FDA}. Henceforth, we focus on the case where $\Lambda=0$.  

$\diamond$ \underline{$\cV_{n\geq 3}(\omega,\omega,\sC,\ldots,\sC)$ vertices.} The $\cV_n(\omega,\omega,\sC,\ldots)$ maps consist of $\binom{n}{2}=n(n-1)/2$ sub-vertices:
\begin{align}
    \cV_n(\omega,\omega,\sC,\ldots,\sC)=V^{(n)}_{\omega,\omega,\sC,\ldots,\sC}+V^{(n)}_{\omega,\sC,\omega,\sC,\ldots,\sC}+\ldots
\end{align}
We shall denote $i$ and $j$ to be the positions of $\omega$'s in the sub-vertices $V$'s (assuming $i<j$) since they can be useful references for describing the sub-vertices $V$'s. It is worth noting that $\cV_n$ are $L_{\infty}$ structure maps obtained by suitable graded symmetrization of a certain $A_{\infty}$ structure maps/sub-vertices $V^{(n)}$ (see e.g. Appendix \ref{app:A}).

For instance, a vertex of the form $V^{(n)}_{\sC,\ldots,\omega_i,\sC,\ldots,\omega_j,\ldots,\sC}$ has the following expressions \cite{Sharapov:2022wpz}:
\begin{align}\label{V-subvertices}
        V^{(n)}_{\sC,\ldots,\omega_i,\sC,\ldots,\omega_j,\ldots,\sC}=(-)^{i+j-1}\sigma_{G_{ij}}\,\mho_n \langle \p_i\,\p_j\rangle^{n-2}\int_{\Delta_{n-2}}\exp\tr\,\Big(\sP_{G_{ij}}\sQ^T
        \Big)\,,
    \end{align}
    where 
    \small
    \begin{subequations}
        \begin{align}
        \sQ&=\begin{pmatrix}
            u_1 & u_2 & \ldots & u_{n-2} & 1-\sum u_i \\
           v_1 & v_2 & \ldots & v_{n-2} & 1-\sum v_i
        \end{pmatrix}\,,\\
        \sP_{G_{ij}}&=\begin{pmatrix}
          \langle \p_i\,\p_1 \rangle & \ldots & -\langle \p_i \,\p_j\rangle&\ldots & 0 &\ldots &\langle\p_i\,\p_{n-2}\rangle & -\langle y\,\p_i \rangle \\
         \langle \p_j\,\p_1\rangle & \ldots & 0 &\ldots & -\langle \p_j\,\p_i\rangle & \ldots & \langle \p_j\,\p_{n-2}\rangle & -\langle y\,\p_j\rangle 
        \end{pmatrix}\label{P-matrix}\,.
    \end{align}
    \end{subequations}
    \normalsize
The integration domain $\Delta_{n}\in \R^{2n}$ in \eqref{V-subvertices} is a compact convex $n$-gon and the integrating variables $u_{k}$ and $v_{k}$ obey 
\begin{align*}
    \frac{u_{1}}{v_{1}}\leq \frac{u_{2}}{v_{2}}\leq \ldots\leq \frac{u_{n}}{v_{n}}\leq \frac{1-U}{1-V}\,,\qquad U=\sum u_{i}\,,\quad  V=\sum v_{i}\,.
\end{align*}
Furthermore, $G_{ij}$ denotes the corresponding tree/graph in homological perturbation theory from which $V$'s come from. Additionally, $\sigma_{G_{ij}}$ represents the permutation that arranges fields into `natural' order of the associated graph. Lastly, 
\begin{align}\label{mho-n-factor}
    \mho_n:=\exp\Big(\sum_{i=1}^n[y\,\p_i]+\sum_{i< j\leq n}[\p_i\,\p_j]\Big)\,,
\end{align}
is a poly-differential operator acting solely on the $\tilde y$-sector by virtue of the $\star$-product \eqref{eq:star-product}.

Canonically, we can take $G_{12}=(\omega_1,\omega_2,\sC_3,\ldots,\sC_{n-2})$ to be the reference graph where all $\sC$'s are on the right of $\omega$'s. For concreteness, the explicit form of the sub-vertex $V^{(n)}_{\omega,\omega,\sC,\ldots,\sC}$ associated with $G_{12}$ has the following form \small
\begin{align}\label{V-subvertices-canonical}
        &V^{(n)}_{\omega_1,\omega_2,\sC,\ldots,\sC}(y,\tilde y;\p_i,\tilde \p_i)=\mho_n \langle \p_1\,\p_2\rangle^{n-2} \times\nn\\
        &\quad \times \int_{\Delta_{n-2}} \exp\Bigg[\Big(1-\sum_{k=1}^{n-2} u_{k}\Big)\langle y\,\p_1\rangle +\Big(1-\sum_{k=1}^{n-2}v_{k}\Big)\langle y\,\p_2\rangle+\sum_{m\neq 1,2}u_{m}\langle \p_1\,\p_{m}\rangle+ \sum_{m\neq 1,2}v_{m}\langle \p_2\,\p_{m}\rangle\Bigg]\,.
    \end{align}
    \normalsize
Then, any other graph can be obtained from $G_{12}$ by moving some number of $\sC$'s in between $\omega$'s and cyclicing all arguments to a specific position e.g. $V_{\sC,\ldots,\omega,\sC,\ldots,\omega,\sC,\sC}\rightarrow V_{\sC,\sC,\sC,\ldots,\omega,\sC,\ldots,\omega}$. This allows us to constructs the remaining vertices/graphs from the reference ones.

    $\diamond$ \underline{$\cU(\omega,\sC,\ldots,\sC)$ vertices.} It turns out that $(\cV,\cU)$ vertices can be related to each others through a $\C$-linear pairing inherited from a degree 2 cyclic $A_{\infty}$-algebra:
    $$\langle-|-\rangle:\cN\otimes\cN^{\vee}[-1]\rightarrow \C[-2]\,, \quad \text{where}\quad \langle a|b\rangle=(-)^{(|a|+1)(|b|+1)}\langle b|a\rangle\,.$$
    The natural pairings in the $y$-sector used to construct all vertices \cite{Sharapov:2022faa,Sharapov:2022awp,Sharapov:2022nps} are as follows:
    \begin{table}[ht!]
    \centering
    \begin{tabular}{|c|} \hline
       $\langle \omega|\sC\rangle=-\langle \sC|\omega\rangle=\exp[\langle \p_1\,\p_2\rangle]\omega(y_1)\sC(y_2)$  \\ \hline
       {$\!\begin{aligned}
     \big\langle V(\omega,\omega,\sC,\ldots,\sC)| C\big\rangle&=+\big\langle \omega|U(\omega,\sC,\ldots,\sC)\big\rangle\\
    V(\p_1,\ldots,\p_n|y)&\mapsto +U(y,\p_1,\ldots,\p_{n-1}|-\p_n)
       \end{aligned}$}\\ \hline
       {$\!\begin{aligned}
     \big\langle \sC| V(\sC,\ldots,\omega,\sC,\ldots,\sC,\omega)\big\rangle&=-\big\langle U(\sC,\sC,\ldots,\omega,\sC,\ldots,\sC)|\omega\big\rangle\nn\\
    V(y|\p_1,\ldots,\p_n)&\mapsto -U(-\p_1|\p_2,\ldots,\p_n,y)
       \end{aligned}$}\\\hline
    \end{tabular}
    \caption{Some natural pairings in chiral-FDA.}
    \label{tab:natural-pairing}
\end{table} \\  
For instance, to obtain $U_{\sC,\ldots,\sC,\omega,\ldots,\sC}\equiv U(\sC,\ldots,\sC,\omega,\ldots,\sC)$ from $V_{\omega,\ldots,\sC,\omega,\ldots,\sC}$, we replace $y\mapsto -\p_n$, $\p_1 \mapsto y$ and $\p_i\mapsto \p_{i-1}$ (for $i\neq n$), while to obtain $U_{\sC,\ldots,\omega,\ldots,\sC}$ from $V_{\sC,\ldots,\omega,\sC,\ldots,\sC,\omega}\equiv V(\sC,\ldots,\omega,\sC,\ldots,\sC,\omega)$, we replace $y\mapsto -\p_1$, $\p_i\mapsto \p_{i+1}$ (for $i\leq n$) and $\p_n\mapsto y$ with an overall minus sign. Since there are $\binom{n}{2}=\frac{n(n-2)}{2}$ number of $V$'s sub-vertices compared to $n$ number of $U$'s sub-vertices at a given order $n$, this replacement rule always works. Therefore, as a matter of convenience, one only needs to construct all $A_{\infty}$ structure maps $V^{(n)}$, and everything will follow.

Note that the pairing $\langle -|-\rangle $ not only facilitates the cyclic reordering of fields and vertices, but is also defined to eliminate all auxiliary $y$ and $\tilde y$ variables. This ensures that all possible contractions between the stand-alone field $\omega$ or $\sC$ 
with the multilinear maps or sub-vertices $V$'s and $U$'s are fully captured.


\section{Kerr-Schild ansatz and exact solutions}\label{sec:chiral-FDA}
Armed with the above chiral FDA data, we now propose a family of explicit representations of $(\omega,\sC)$ and show that they form exact harmonic solutions in chiral HSGRA. 

Recall that since the positive-helicity fields $C$'s can be generated from $\omega$ via linear chiral FDA, cf. \eqref{linear-FDA-general}, we can construct all positive-helicity data starting from the physical fields $\cA_{\alpha(2s-2)}$. Furthermore, we show below that $\cA_{\alpha(2)}$ is the field that generate a self-dual background. It is also worth stressing before running our analysis that $B^{\alpha(2s)}$ and their $\nabla$-generated thereof are \emph{not} derived objects from the positive helicity fields $\cA_{\alpha(2s-2)}$. Rather, they should be regarded simply as linearized fields that propagate on the self-dual background generated by $\cA_{\alpha(2)}\equiv\omega_{\alpha(2)}$.

\subsection{Cartan's structure equations for empty self-dual background}
One of the nice features of $\hs_{\Lambda}$ is that its $\mso(2,3)$ subalgebra can be defined via the generators:
\begin{align}\label{eq:generators}
    L_{\alpha\alpha}=-\frac{1}{2}y_{\alpha}y_{\alpha}\,,\qquad P_{\alpha\dot\alpha}=-\frac{1}{2}y_{\alpha}\tilde y_{\dot\alpha}\,,\qquad \widetilde L_{\dot\alpha\dot\alpha}=-\frac{1}{2}\tilde y_{\dot\alpha}\tilde y_{\dot\alpha}\,.
\end{align}
As usual, $P_{\alpha\dot\alpha}$ is translation generator, and $(L_{\alpha\alpha},\widetilde L_{\dot\alpha\dot\alpha})$ are Lorentz's. The algebra which describes AdS${}_4$ is \cite{Krasnov:2021cva} 
\begin{subequations}\label{eq:SDBG-algebra}
    \begin{align}
        [L_{\alpha\beta},L_{\gamma\delta}]_{\star}&= \Lambda\big(\epsilon_{\beta\gamma}L_{\alpha\gamma}+\epsilon_{\alpha\gamma}L_{\beta\delta}+\epsilon_{\beta\delta}L_{\alpha\gamma}+\epsilon_{\alpha\gamma}L_{\beta\delta}\big)\,,\\
        [\widetilde L_{\dot\alpha\dot\beta},\widetilde L_{\dot\gamma\dot\delta}]_{\star}&=\big(\epsilon_{\dot\beta\dot\gamma}L_{\dot\alpha\dot\gamma}+\epsilon_{\dot\alpha\dot\gamma}L_{\dot\beta\dot\delta}+\epsilon_{\dot\beta\dot\delta}L_{\dot\alpha\dot\gamma}+\epsilon_{\dot\alpha\dot\gamma}L_{\dot\beta\dot\delta}\big)\,,\\
        [L_{\alpha\beta},P_{\gamma\dot\gamma}]_{\star}&=\Lambda\Big(\epsilon_{\beta\gamma}P_{\alpha\dot\gamma}+\epsilon_{\alpha\gamma}P_{\beta\dot\gamma}\Big)\,,\\
        [\widetilde L_{\dot\alpha\dot\beta},P_{\gamma\dot\gamma}]_{\star}&=\big(\epsilon_{\dot\beta\dot\gamma}P_{\gamma\dot\alpha}+\epsilon_{\dot\alpha\dot\gamma}P_{\gamma\dot\beta}\big)\,,\\
        [P_{\alpha\dot\alpha},P_{\beta\dot\beta}]_{\star}&=\big(\epsilon_{\dot\alpha\dot\beta}L_{\alpha\beta}+\Lambda \epsilon_{\alpha\beta}\widetilde L_{\dot\alpha\dot\beta}\big)\,.
    \end{align}
\end{subequations}
It is easy to notice that $L_{\alpha\beta}$ becomes central of the algebra \eqref{eq:SDBG-algebra} in the flat limit $\Lambda\rightarrow 0$. That is
\begin{align}
    [P_{\alpha\dot\alpha},P_{\beta\dot\beta}]_{\star}\big|_{\Lambda=0}=\epsilon_{\dot\alpha\dot\beta}L_{\alpha\beta}\,,
\end{align}
instead of being trivial. For this reason, the algebra generated by \eqref{eq:generators} in the flat limit is referred to as a \emph{deformed} chiral Poincar\'e algebra \cite{Ponomarev:2022ryp}.

The Cartan structure equations associated to \emph{empty} flat spacetime read:
\begin{align}\label{Cartan1}
    d\Omega-\frac{1}{2}[\Omega,\Omega]=0\,,\qquad \Omega=\frac{1}{2}\varpi_{\alpha\beta}L^{\alpha\beta}+h_{\alpha\dot\alpha}P^{\alpha\dot\alpha}+\frac{1}{2}\widetilde\varpi_{\dot\alpha\dot\beta}\widetilde L^{\dot\alpha\dot\beta}\,.
\end{align}
where $h_{\alpha\dot\alpha}$ is the background vierbein and $\varpi_{\alpha\beta}$, $\widetilde\varpi_{\dot\alpha\dot\beta}$ are the (A)SD components of the spin connection. In terms of components, \eqref{Cartan1} unfolds into
\begin{subequations}\label{eq:evarpi1}
\begin{align}
    R_{\alpha\beta}:&=d\varpi_{\alpha\alpha}+H_{\alpha\alpha}=0\,,\label{Rab1}\\ 
    T_{\alpha\dot\alpha}:&=dh_{\alpha\dot\alpha}+\widetilde\varpi_{\dot\alpha\dot\gamma}\wedge h_{\alpha}{}^{\dot\gamma}\equiv \nabla h_{\alpha\dot\alpha}=0\,,\label{tor1}\\
    R_{\dot\alpha\dot\beta}:&=d\widetilde\varpi_{\dot\alpha\dot\alpha}+\widetilde\varpi_{\dot\alpha\dot\gamma}\wedge \widetilde\varpi_{\dot\alpha}{}^{\dot\gamma}=0\,. \label{Rab2} 
\end{align}
\end{subequations}
The reader may notice the difference with the Cartan structure equations associated with the conventional Poincar\'e algebra where $[P,P]=0$. In particular, for SD background, the ASD component of the spin connection does \emph{not} participate in the torsion constraint \eqref{tor1}. Furthermore, the Riemann two-form \eqref{Rab1} reveals an intriguing feature for SD background. That is, the spin connection $\varpi_{\alpha\alpha}$ is the main player of this game since the vierbein and the SD component of the spin connection can be solved for once we know $\varpi_{\alpha\alpha}.$

\subsection{Deformed self-dual pp-wave background}
This subsection proposes a class of SD pp-wave backgrounds in which chiral HSGRA is well-defined. This class of backgrounds is generated by a fluctuation of the ASD component of the spin connection, which is characterized by a function $\tH_2(v,z,\tilde z)$ that is harmonic in the complex 2-plane $(z,\tilde z)$, and two principal spinors $(\lambda,\tilde\lambda)$.\footnote{As is well-known, any symmetric tensor of the form $S_{\alpha(s)}\equiv S_{\alpha_1\ldots \alpha_s}$ can be factorized into a symmetrized product of one-index spinors \cite{Penrose:1986ca}, i.e.
\begin{align}\label{decompositionPS}
    S_{\alpha_1\ldots\alpha_s}=\tau_{(\alpha_1}\ldots\zeta_{\alpha_s)}\,,
\end{align}
where each spinor $\{\tau,\ldots,\zeta\}$ is referred to as
\emph{principal spinor} of $S_{\alpha(s)}$. Note that these principal spinors can be used to classify the Weyl tensors in classical GR, see e.g. \cite{Newman:1961qr,Penrose:1986ca}.} See also e.g. previous work \cite{Tod:1979tt,Sparling:1981nk} for the construction of self-dual pp-wave solution in somewhat similar vein. 
\paragraph{Deformed Cartan's structure equation.} Suppose we want to construct a non-trivial vacuum solution which does not describe an empty flat space. Then, the new vacuum solution whose geometrical information is encoded by the traceless part of the Riemann tensor, i.e. the Weyl tensor, can be regarded as a deformation away from empty flat space. In this case, we shall modify the Cartan's structure equations \eqref{eq:evarpi1} to
\begin{subequations}\label{eq:deform-Cartan}
\begin{align}
    \Rbold_{\alpha\alpha}:&=d\varpibold_{\alpha\alpha}+\Hbold_{\alpha\alpha}=0\,,\label{eq:Rbold1}\\
    D\ebold_{\alpha\dot\alpha}&=d\ebold_{\alpha\dot\alpha}+\widetilde\varpibold_{\dot\alpha\dot\beta}\wedge \ebold_{\alpha}{}^{\dot\beta}=0\,,\label{deformedvierbeinpostulate}\\
    \widetilde \Rbold_{\dot\alpha\dot\alpha}:&=d\widetilde\varpibold_{\dot\alpha\dot\beta}+\widetilde\varpibold_{\dot\alpha\dot\gamma}\wedge \widetilde\varpibold_{\dot\beta}{}^{\dot\gamma}=\widetilde \Hbold^{\dot\beta\dot\beta}C_{\dot\beta\dot\beta\dot\alpha\dot\alpha}\,,\label{eq:Rbold2}
\end{align}
\end{subequations}
where 
\begin{align}
    \quad \varpibold_{\alpha\beta}=\varpi_{\alpha\alpha}+\omega_{\alpha\alpha}\,,\quad \ebold_{\alpha \dot \alpha}=h_{\alpha\,\dot\alpha}+\omega_{\alpha\dot\alpha},\quad \widetilde\varpibold_{\dot\alpha\dot\beta}=\widetilde\varpi_{\dot\alpha\dot\beta}+\omega_{\dot\alpha\dot\beta}\,,
\end{align}
are the deformed vierbein and spin connections with $\omega_{\alpha\alpha},\omega_{\alpha\dot\alpha},\omega_{\dot\alpha\dot\alpha}$ being the associated fluctuations.\footnote{All the \emph{bold} symbols are geometrical data of the deformed background. } Here, $D$ stands for the connection of the deformed SD background, and \eqref{eq:deform-Cartan} contains only the SD component $C_{\dot\alpha(4)}$ of the Weyl tensor 
due to SD condition, cf. \cite{Krasnov:2021nsq}.

\paragraph{Tetrad and dyad bases.} 
In $4d$, we can associate to each point $x\in \cM$ a \emph{tetrad} of complex null vectors $\zeta^a=(\ell^a,n^a,p^a,\tilde p^a)$ (see e.g. \cite{Penrose:1986ca}) where
\begin{align}\label{eq:normalization}
    \ell_an^a=-1=p_a\tilde p^a\,,\qquad \qquad  \ell_ap^a=\ell_a \tilde p^a=n_ap^a=n_a\tilde p^a=0\,.
\end{align}
Then, a vector $V^a$ may be written in the tetrad basis as
\begin{align}
    V^a=u\ell^a+vn^a+zp^a+\tilde z \tilde p^a\,.
\end{align}
Here, we may refer to $(u,v,z,\tilde z)$ as spacetime coordinates. 

Since the local Lorentz group of a $4d$ complexified $\cM$ is
$SL(2,\C)\times SL(2,\C)$, we can also associate to each point $x\in \cM$ the \emph{dyad} bases $(\lambda^{\alpha},\tilde\lambda^{\dot\alpha})$ and $(\mu^{\alpha},\tilde\mu^{\dot\alpha})$, cf. \cite{Penrose:1986ca}, where $(\mu,\tilde\mu)$ is another pair of principal spinors such that
\begin{align}\label{eq:spinor-norm}
    \langle \mu\,\lambda\rangle=1=[\tilde\mu \,\tilde\lambda]\,.
\end{align}
Using the same normalization in \eqref{eq:normalization}, we can fix unambiguously
\begin{align}
    \ell_a= \lambda_{\alpha}\tilde\lambda_{\dot\alpha}\,,\qquad n_a=  \mu_{\alpha}\tilde\mu_{\dot\alpha}\,,\qquad p_a=\lambda_{\alpha}\tilde\mu_{\dot\alpha}\,,\qquad \tilde p_a=\mu_{\alpha}\tilde\lambda_{\dot\alpha}\,.
\end{align}
In constructing the proposed harmonic solutions for chiral HSGRA, we will use $(\lambda,\tilde\lambda)$ to characterize positive-helicity fields and $(\mu,\tilde\mu)$ to describe the negative-helicity fields as they are related in the sense of \eqref{eq:free action}.

Note that we will define the spinors $(\lambda,\tilde\lambda)$ as well as $(\mu,\tilde\mu)$ to be covariantly constant wrt. the connection associated to empty flat background, i.e. 
\begin{align}
    \nabla\lambda_{\alpha}=0\,,\qquad \nabla\tilde\lambda_{\dot\alpha}=0\,,\qquad \nabla\mu_{\alpha}=0\,,\qquad \nabla\tilde\mu_{\dot\alpha}=0\,.
\end{align}
\paragraph{Kerr-Schild ansatz.} Typically, in standard GR, one can simplify the process of finding exact solutions for Einstein equations by considering the Kerr-Schild ansatz \cite{kerr2009republication}
\begin{align}\label{eq:KerrSchildAnsatz}
    \gbold_{ab}=g_{ab}+\tH \ell_a\ell_b\,,\qquad \qquad \gbold^{ab}=g^{ab}-\tH \ell^a\ell^b\,
\end{align}
where 
$g_{ab}$ is the metric of the background and $\tH$ is a scalar profile function whose role is to generate a deformed spacetime $(\cM,\gbold)$. However, in SDGR, the situation is slightly different. In particular, as the ASD component of the spin connection is the main object where all other geometrical data are derived from, cf. \cite{Krasnov:2016emc}, one must consider a modified version of the Kerr-Schild ansatz, adapted to accommodate the ASD component of the spin connection. For our purpose of constructing exact solutions in SD setting, we can consider
\begin{align}\label{eq:chiral-ansatz-1}
    \varpibold_{\alpha\alpha}=\varpi_{\alpha\alpha}+\tH_2 \langle \lambda|h|\lambda] \lambda_{\alpha}\lambda_{\alpha}\,,\qquad \langle \lambda|h|\lambda]:=\lambda^{\gamma}h_{\gamma\dot\gamma}\tilde\lambda^{\dot\gamma}\,.
\end{align}
Note that the above ansatz is inspired from the recent work \cite{Adamo:2022lah,Neiman:2023bkq}. For the sake of generality, we have introduced a subscript ``2'' to the profile function $\tH$ to indicate that it corresponds to the ansatz for the spin-2 sector. For spin-$s$ sector, we may denote the corresponding profile function as $\tH_s$ (see below). 

\paragraph{Harmonicity of $\tH_2$.} There is a few properties that we want to impose on the profile function $\tH_2$. Namely, $\tH_2$ must be $u$-independent and harmonic in $(z,\tilde z)$-plane spanned by $(p^a,\tilde p^a)$ \cite{Stewart:1990uf}, i.e. 
\begin{align}\label{eq:harmonicity-H2}
    \ell^a\nabla_a\tH_2=0\,,\qquad p^a\tilde p^b\nabla_a\nabla_b\tH_2=0\,.
\end{align}
This allows us to write $\tH_2(v,z,\tilde z)$ as the sum of two functions
\begin{align}
    \tH_2(v,z,\tilde z)=\tH_2(v, z)+\widetilde\tH_2(v,\tilde z)\,,
\end{align}
so that $\tH_2$ effectively behaves as a $2d$ off-shell scalar field. Nevertheless, as shown below, the self-duality condition will set one of them, specifically $\widetilde{\tH}_2(v, \tilde z)$, to zero. 

The harmonicity of $\tH_2$ can be recast into the following pde: 
\begin{align}\label{eq:harmonic-fda-2}
    \nabla\tH_2=h^{\alpha\dot\alpha}(\lambda_{\alpha}\tilde\tH_{2\,\dot\alpha}+\tilde\lambda_{\dot\alpha}\tH_{2\,\alpha})\,,
\end{align}
where $\tH_{2\,\alpha}$ and $\tilde\tH_{2\,\dot\alpha}$ are first-order derivatives of $\tH_2$ in the $y$- and $\tilde y$-sector, respectively. Furthermore, all derivatives of $\tH_2$, i.e. $\nabla^{\alpha\dot\alpha}\ldots \nabla^{\alpha\dot\alpha} \tH_2 \equiv \tH_2^{\alpha(k)\,\dot\alpha(k)}$ can be factorized as
\begin{align}\label{fractorize}
    \tH^{\alpha(k)\,\dot\alpha(k)}&= \lambda^{\alpha(k)} \widetilde\tH_2^{\dot\alpha (k)} +\tilde\lambda^{\dot\alpha(k)}\tH_2^{\alpha(k)}\,.
\end{align}
This special property of $\tH_2$ can be used to generate the linear FDA for the spin-2 sector. However, let us first specify the deformed data in \eqref{eq:deform-Cartan}.
\paragraph{Solving deformed Cartan's structure equations.} Assuming $C_{\dot\alpha(4)}=\phi_{(0,4)}\tilde\lambda_{\dot\alpha(4)}$, where $\phi_{(0,4)}$ is a scalar function, cf. \eqref{eq:phi(0,4)}, and plugging the ansatz \eqref{eq:chiral-ansatz-1} to \eqref{eq:deform-Cartan}, we obtain
\begin{subequations}\label{eq:SDGR-data}
    \begin{align}
        [\lambda\,\tH_2]&=0\,,\label{eq:SD-tH}\\
        \varpibold_{\alpha\alpha}&=\varpi_{\alpha\alpha}+\tH_2 \langle \lambda|h|\lambda] \lambda_{\alpha}\lambda_{\alpha}\,,\\
        \ebold_{\alpha\dot\alpha}&=h_{\alpha\dot\alpha}+\langle \lambda\,\tH_2\rangle\langle \lambda|h|\lambda]\lambda_{\alpha}\tilde\lambda_{\dot\alpha}\,,\label{eq:deformed-veirbein}\\
        \widetilde\varpibold_{\dot\alpha\dot\beta}&=\widetilde\varpi_{\dot\alpha\dot\beta}+\langle \lambda^{\gamma(2)}\,\tH_{2\,\gamma(2)}\rangle \langle \lambda|h|\lambda]\tilde\lambda_{\dot\alpha}\tilde\lambda_{\dot\alpha}\,,\\
        C_{\dot\alpha(4)}&=\phi_{(0,4)}\tilde\lambda_{\dot\alpha(4)}\,,\qquad \phi_{(0,4)}=\frac{1}{2}\langle \lambda^{\gamma(3)}\,\tH_{2\,\gamma(3)}\rangle \,,\label{eq:phi(0,4)}
    \end{align}
\end{subequations}
where
\begin{align}
    [\lambda\,\tH_2]:=\tilde\lambda^{\dot\alpha}\,\tH_{2\,\dot\alpha}\,,\qquad \langle \lambda^{\alpha(n)}\,\tH_{2\,,\alpha(n)}\rangle:= \lambda^{\alpha(n)}\,\tH_{2\,\alpha(n)}\,,\qquad \lambda^{\alpha(n)}\equiv\lambda^{\alpha_1}\ldots \lambda^{\alpha_n}\,.
\end{align}
In deriving the above results, the following relations are useful:
\begin{subequations}
    \begin{align}
        h_{\alpha\dot\alpha}\langle \lambda|h|\lambda]&=-\frac{1}{2}\Big(H_{\alpha\beta}\lambda^{\beta}\tilde\lambda_{\dot\alpha}+\widetilde H_{\dot\alpha\dot\beta}\tilde\lambda^{\dot\beta}\lambda_{\alpha}\Big)\,,\\
        d\langle \lambda|h|\lambda]\,\tH_2&=+\frac{1}{2}\Big(\cH[\lambda\,\tH_2]+\widetilde\cH \langle \lambda\,\tH_2\rangle\Big)\,,\\
        d\langle \lambda|h|\lambda]\langle \lambda^{\alpha(k)}\,\tH_{2\,\alpha(k)}\rangle&=+\frac{1}{2}\widetilde\cH\langle \lambda^{\alpha(k+1)}\,\tH_{2\,\alpha(k+1)}\rangle\,,
    \end{align}
\end{subequations}
where 
\begin{align}
    \cH:=H_{\alpha(2)}\lambda^{\alpha(2)}\,,\quad \widetilde\cH:=\widetilde H_{\dot\alpha(2)}\tilde\lambda^{\dot\alpha(2)}\,.
\end{align}
Observe that \eqref{eq:SD-tH} implies the vanishing of all higher-order derivatives of $\tH_{2}$ in the $\tilde y$-sector. Thus, our analysis will focus solely on the derivative of $\tH_2$ in the $y$-sector.

From \eqref{eq:SDGR-data}, it is easy to see that the self-dual connection takes the form 
\begin{align}\label{eq:SD-connection}
    D=\nabla-\frac{1}{2}\tH_2\langle \lambda|h|\lambda]\lambda_{\alpha}\lambda_{\alpha}L^{\alpha\alpha}-\frac{1}{2}\langle \lambda^{\gamma(2)}\,\tH_{2\,\gamma(2)}\rangle \langle \lambda|h|\lambda]\tilde\lambda_{\dot\alpha}\tilde\lambda_{\dot\alpha}\widetilde L^{\dot\alpha\dot\alpha}\,.
\end{align}
Remarkably, the principle spinors $\lambda_{\alpha},\tilde\lambda_{\dot\alpha}$ are again covariantly constant wrt. the deformed connection $D$, i.e.
\begin{align}\label{eq:covariant-constancy-lambda}
    D\lambda_{\alpha}=0\,,\qquad D\tilde\lambda_{\dot\alpha}=0\,.
\end{align}
Therefore, $D$ only acts non-trivially on the profile function $\tH_2$ and its derivative thereof, while ``effortlessly'' passing through $(\lambda,\tilde\lambda)$. This fact is essential for us to obtain a simple exact solution for chiral HSGRA with the above setting.

Note, however, that
\begin{align}\label{eq:non-covariant-constancy-mu}
    D\mu^{\alpha}=-\tH_2\langle \lambda|h|\lambda]\lambda^{\alpha}\,,\qquad D^2\mu^{\alpha}=0\,.
\end{align}
Thus, unlike $(\lambda,\tilde\lambda)$, the principal spinors $(\mu,\tilde\mu)$ are not covariantly constant wrt. $D$. 
\paragraph{Metric of self-dual pp-wave background.} Remarkably, the SD background metric generated by $\omega_{\alpha\alpha}$ only slightly deviates from the conventional solutions in GR, cf. \cite{brinkmann1925einstein,Aichelburg:1970dh,bilge1983generalized,Stewart:1990uf,Stephani:2003tm}, by being self-dual and well-defined in Euclidean or split signature. 

$-$ Recall that the line element for Brinkmann metric in the conventional GR is 
\begin{align}
    d s^2=-du\,d v+\tH(v,z,\bar z) d v^2+ d zd\bar z\,.
\end{align}
Here, $\tH$ is the profile function, which is harmonic in $(z,\bar{z})$-plane, 
that generates a generic Petrov type-N spacetime given that $\ell_a=\lambda_{\alpha}\tilde\lambda_{\dot\alpha}$. As $\tH(v,z,\bar{z})$ is harmonic in the $(z,\bar z)$ plane, we can write
\begin{align}\label{generalH}
    \tH(v,z,\bar z)=f_1(v,z)+f_2(v,\bar{z})\,,
\end{align}
for generic functions $f_1(v,z)$, $f_2(v,\bar z)$. For real solutions, then it is necessary that $(f_2)^*=f_1$. 

$-$ In SD setting, one can deduce from the deformed vierbein \eqref{eq:deformed-veirbein} that
\begin{align}
    g^{SD}_{\mu\nu}=\eta_{\mu\nu}+\langle\lambda\,\tH_2\rangle\ell_\mu\ell_{\nu}\,.
\end{align}
That is, the metric is now generated by the first-derivative of $\tH_2$ in the $z$-direction.\footnote{The derivative is represented by the operator $p^a\nabla_a$.} In such a setting, we have
\begin{subequations}\label{eq:deform-metric-SD}
    \begin{align}
    ds^2_{SD}&=-du\,dv+\langle \lambda\,\tH_2\rangle dv^2+dzd\tilde z\,,\qquad \langle\lambda\,\tH_2\rangle=j(v,z)\,,\\
    \phi_{(0,4)}&=\langle \lambda^{\alpha(3)}\,\tH_{2\,\alpha(3)}\rangle \sim \p_{z}^3j(v, z)\,.
\end{align}
\end{subequations}
That is, the Brinkmann-like profile function is now dependent solely on $(v, z)$, consistent with the self-duality of the solution.

Intriguingly, even in the SD case, it is possible to place multiple pp-waves at different locations $v_i$ along the $v$-direction so that the total profile function for the self-dual pp-wave metric can be written as $\langle \lambda\,\tH_2\rangle =\sum_i\langle \lambda\,\tH_2^i(v_i)\rangle$.


\subsection{Linear chiral-FDA}
This subsection studies the linearized FDA for chiral HSGRA. Let us begin by stating our Kerr-Schild ansatz proposal for all positive-helicity spinning fields. 

\subsubsection{Kerr-Schild ansatz for all spin}

Following the discussion around \eqref{eq:simplify-omega}, it is reasonable to consider the following ansatz for all spins
\begin{subequations}\label{eq:LC-ansatz}
    \begin{align}
    \omega_{\alpha(2s-2)}:&=\langle \lambda|h|\lambda]\lambda_{\alpha(2s-2)}\tH_s\,,\label{omega-H}\\
    B^{\alpha(2s)}&:=\psi_s\mu^{\alpha(2s)}\,,\qquad \psi_s:=\lambda_{\alpha(2s)}B^{\alpha(2s)}\label{psi-s}\,,\quad \langle\mu\,\lambda\rangle=1\,,
\end{align}
\end{subequations}
where $\tH_s$ is a harmonic scalar function associated with a positive-helicity massless spin-$s$ field, i.e. 
\begin{align}\label{eq:harmonic-s}
    \ell^a\nabla_a\tH_s=0\,,\qquad p^a\tilde p^b\nabla_a\nabla_b\tH_s=0\,,
\end{align}
as in the case of $\tH_2$, cf. \eqref{eq:harmonicity-H2}. Furthermore, for the negative-helicity fields $B^{\alpha(2s)}$, we will assume that
\begin{align}\label{eq:harmonicity-psi-s}
    \nabla \mu^{\alpha}=0\,,\qquad \ell^a\nabla_a\psi_s=0\,,\qquad p^a\tilde p^b\nabla_a\nabla_b\psi_s=0\,.
\end{align}
This guarantees that the free action of the model is \eqref{eq:free action}. Note that on the proposed SD pp-wave background
\begin{align}
    D_{\beta}{}^{\dot\alpha}B^{\beta\alpha(2s-1)}=0\,,
\end{align}
if $p^a\nabla_a\psi_s=0$, or $\psi=\psi(v,\tilde z)$. This is in accordance with the fact that $\psi_s$ are harmonic, cf. \eqref{eq:harmonicity-psi-s}, and $B^{\alpha(2s)}$ are linear fluctuations on the proposed self-dual background. Thus, \eqref{eq:harmonicity-psi-s} can serve as a suitable ansatz for the negative-helicity fields.

\subsubsection{\texorpdfstring{$\tH$}{H}-FDA}
Let us now cast the system \eqref{eq:harmonic-fda-2} into generating function form. Since we have assumed $\tH_s$ to be harmonic, cf. \eqref{eq:harmonic-s}, all $\tH_s^{\alpha(k)\,\dot\alpha(k)}$ will admit the same factorization form as in \eqref{fractorize}. Then, we can write the factorization condition for $\tH_s^{\alpha(k)\,\dot\alpha(k)}$ using some generating functions $(\tH_s(y),\widetilde\tH_s(\tilde y))$ as 
\begin{subequations}
    \begin{align}
    \nabla \tH_s(y)&=\theta^{\alpha}\p_{\alpha}\tH_s(y)\,,\qquad \theta^{\alpha}:=\tilde\lambda_{\dot\beta}h^{\alpha\dot\beta}\,,\\
    \nabla\widetilde{\tH}_s(y)&=\tilde\theta^{\dot\alpha}\p_{\dot\alpha}\widetilde{\tH}_s(\tilde y)\,,\qquad \tilde\theta^{\dot\alpha}:=\lambda_{\beta}h^{\beta\dot\alpha}\,.
\end{align}
\end{subequations}
One can check that the above system satisfies the Bianchi identities
\begin{align}
    \nabla^2\tH_s(y)=0\,,\qquad \nabla^2\widetilde{\tH}_s(\tilde y)=0\,.
\end{align}
On the deformed background sourced by $\omega_{\alpha\alpha}$, cf. \eqref{eq:chiral-ansatz-1}, it can be shown that
\begin{subequations}\label{eq:H-FDA-general}
    \begin{align}
    D\tH_s^{\alpha(k)}&=\theta_{\beta}\tH_s^{\beta \alpha(k)}+k\tH_2\langle \lambda|h|\lambda]\lambda^{\alpha}\lambda_{\gamma}\tH_s^{\gamma\alpha(k-1)}\,,\\
    D\widetilde\tH_s^{\dot\alpha(k)}&=\tilde\theta_{\dot\beta}\widetilde\tH_s^{\dot\beta\dot\alpha(k)}+k\tH_2\langle\lambda^{\gamma(2)}\tH_{2\,\gamma(2)}\rangle\langle \lambda|h|\lambda]\tilde\lambda^{\dot\alpha}\tilde\lambda_{\dot\gamma}\widetilde\tH_s^{\dot\gamma\dot\alpha(k-1)}\,,
\end{align}
\end{subequations}
or in terms of generating function
\begin{subequations}
    \begin{align}
        D\tH_s(y)&=\theta^{\alpha}\big(\p_{\alpha}+\lambda_{\alpha}\langle\lambda\,y\rangle\langle \lambda\,\p\rangle\big)\tH_s(y)\,,\nn\\
        D\widetilde\tH_s(\tilde y)&=\tilde\theta^{\dot\alpha}\big(\tilde\p_{\dot\alpha}+\tilde\lambda_{\dot\alpha}\langle \lambda^{\gamma(2)}\tH_{2\,\gamma(2)}\rangle[\lambda\,y][\lambda\,\p]\big)\widetilde\tH_s(\tilde y)\,.
    \end{align}
\end{subequations}
Note that 
\begin{align}
    D^2\tH_s(y)=0\,,\qquad D^2\widetilde\tH_s(y)=0\,.
\end{align}
Thus, as long as $\tH_s$ is harmonic, cf. \eqref{eq:harmonic-s}, the above system can generate all auxiliary scalars which are derivative of $\tH_s$. 
\subsubsection{Linear FDA for spin-2 sector}
Let us now consider the FDA for the harmonic spin-2 solution. 
As observed earlier in \eqref{eq:SD-tH}, where $[\lambda\,\tH_2]=0$ due to self-duality constraint, it is natural to impose
\begin{align}
    \widetilde\tH_2^{\dot\alpha(k\geq 1)}=0\,,
\end{align}
which turns out to be a consistent condition for constructing a SD pp-wave background. 

In the spin-2 sector, all data we need to generate to complete a linear-FDA are
\begin{align}\label{eq:linear-FDA-spin-2-sector}
    \{\omega_{\alpha\alpha},\omega_{\alpha\dot\alpha},\omega_{\dot\alpha\dot\alpha}\}\bigcup_{k=0}^{\infty}\{C^{\alpha(k)\,\dot\alpha(k+4)}\}\bigcup_{m=0}^{\infty}\{B^{\alpha(m+4)\,\dot\alpha(m)}\}\,.
\end{align}
Using \eqref{eq:H-FDA-general}, and covariant constancy of $(\lambda,\tilde\lambda)$ wrt. $D$, cf. \eqref{eq:covariant-constancy-lambda}, we obtain
\begin{align}
    C^{\alpha(k)\,\dot\alpha(4+k)}=\frac{1}{2}\langle \tH_{2}^{\alpha(3+k)}\lambda_{\alpha(3)}\rangle \tilde\lambda^{\dot\alpha(4+k)}\,
\end{align}
from solving iteratively $\nabla C_{\alpha(k)\,\dot\alpha(4+k)}=h^{\beta\dot\beta}C_{\alpha(k)\beta\,\dot\alpha(4+k)\dot\beta}$. The solution can be written in terms of generating function as
\begin{align}
    C_4(y)=\frac{1}{(4+N_y)!}[y\,\lambda]^{4+N_y}\langle\lambda\,\p'\rangle^3 \tH_2(y')\Big|_{y'=y}\,,\qquad [y\,\lambda]=\tilde y^{\dot\alpha}\tilde\lambda_{\dot\alpha}\,,\quad N_y=y^{\alpha}\p_{\alpha}\,.
\end{align}
For the negative-helicity auxiliary fields $B^{\alpha(m+4)\,\dot\alpha(m)}$, we observe that $B^{\alpha(4+m)\,\dot\alpha(m)}$ for $m\geq 1$ is non-linear in $\tH_2$ and their derivatives even for the case $\psi_2=const$. This is due to the fact that $D\mu^{\alpha}\neq 0$ when the deformation sourced by $\omega_{\alpha\alpha}$ occurs. Although we cannot find a closed form for $B^{\alpha(m+4)\,\dot\alpha(m)}$, we can still prove that the set of linearized data \eqref{eq:linear-FDA-spin-2-sector} can solve the non-linear EOMs for chiral HSGRA, cf. \eqref{eq:FDA}. Therefore, they provide us exact solutions for the spin-2 sector.

\subsubsection{Linear FDA for all spins}
Proceeding analogously with the case of spin-2, we will assume that 
\begin{align}
    \widetilde\tH_s^{\dot\alpha(k\geq 1)}=0\,,
\end{align}
and find the following linear data
\begin{align}\label{eq:linear-FDA-spin-s-sector}
   \bigcup_s\left\{ \bigcup_{j=0}^{2s-2}\{\omega_{\alpha(2s-2-j)\,\dot\alpha(j)}\}\bigcup_{k=0}^{\infty}\{C^{\alpha(k)\,\dot\alpha(2s+k)}\}\bigcup_{m=0}^{\infty}\{B^{\alpha(2s+m)\,\dot\alpha(m)}\}\right\}\,.
\end{align}
\paragraph{one-form sector.} In analogy with the spin-2 sector, we start with the Kerr-Schild ansatz \eqref{omega-H}, solve iteratively the FDA equation $\nabla\omega_{\alpha(2s-2-k)\,\dot\alpha(k)}=h_{\alpha}{}^{\beta}\omega_{\alpha(2s-3-k)\,\dot\alpha(k)\dot\beta}$ and obtain
\begin{align}
    \omega_{\alpha(2s-2-k)\,\dot\alpha(k)}=\langle \lambda^{\gamma(k)}\,\tH_{s\,\gamma(k)}\rangle\langle \lambda|h|\lambda]\lambda_{\alpha(2s-2-k)}\tilde\lambda_{\dot\alpha(k)}\,,\qquad 0\leq k\leq 2s-2\,.
\end{align}
This can be written in terms of generating function as
\begin{align}\label{eq:omega-s-generating}
    \omega_s=\frac{\langle \lambda|h|\lambda]}{\Gamma[2s-1]}\Big(\langle y\,\lambda\rangle+[y\,\lambda]\langle\lambda\,\p'\rangle\Big)^{2s-2}\tH_s(y')\Big|_{y'=0}\,.
\end{align}
Here, we keep $\tH_{s\geq 3}\neq \tH_2$ for generality. However, it is intriguing noting that when $\{\tH_s\}=\tH_2$ for $\forall s$, we can subsume the full generating function $\omega$ of the one-form sector as
\begin{align}\label{eq:omegas-generating}
    \omega\big|_{\tH_s=\tH_2}=\sum_s\omega_{s\geq 1}\big|_{\tH_s=\tH_2}=\cosh\Big(\langle y\,\lambda\rangle+[y\,\lambda]\langle\lambda\,\p'\rangle\Big)\tH_2(y')\Big|_{y'=0}\,.
\end{align}
\paragraph{zero-form sector.} Similarly to the spin-2 case, we obtain the generating function for a spin-$s$ 0-form field $C$ as
\begin{align}
    C^{\alpha(k)\,\dot\alpha(2s+k)}=\frac{1}{2}\langle \tH_{2}^{\alpha(2s-1+k)}\lambda_{\alpha(2s-1)}\rangle \tilde\lambda^{\dot\alpha(2s+k)}\,
\end{align}
or
\begin{align}\label{eq:Cs-generating}
    C_s(y)=\frac{1}{(2s+N_y)!}[y\,\lambda]^{2s+N_y}\langle\lambda\,\p'\rangle^{2s-1} \tH_s(y')\Big|_{y'=y}\,,\qquad [y\,\lambda]=\tilde y^{\dot\alpha}\tilde\lambda_{\dot\alpha}\,,\quad N_y=y^{\alpha}\p_{\alpha}\,.
\end{align}
As before, $B^{\alpha(2s+m)\,\dot\alpha(m)}$ for $m\geq 1$ is non-linear in $\tH_2$ and their derivatives. However, we observe in general that
\begin{align}\label{eq:B-generating}
    B_s=B_s(\langle y\,\lambda\rangle,[y\,\lambda],\langle y\,\mu\rangle,\tH_2,\psi_s)\,.
\end{align}
This fact is sufficient to prove that the set in \eqref{eq:linear-FDA-spin-s-sector}, whose 1-form subset consists of $\omega_s$ generating functions (cf. \eqref{eq:omega-s-generating}), and the 0-form subset consists of $\sC_s=(C_s,B_s)$ fields, indeed solves the full non-linear equations of motion of chiral HSGRA.
\paragraph{Plane-wave solutions.} Let us consider the following profile function \cite{Skvortsov:2022syz}
\begin{align}
    \tH(y)=\exp\Big(\pm x^{\alpha\dot\alpha}\kappa_{\alpha}\tilde\lambda_{\dot\alpha}+\langle y\,\kappa\rangle\Big)\,,\qquad \langle\lambda\,\kappa\rangle=1\,,
\end{align}
where $\kappa$ is a spinor associated with the momentum of a plane-wave. Then, it can be computed from \eqref{eq:Cs-generating} and \eqref{eq:omega-s-generating} that
\begin{subequations}
    \begin{align}
        C&\sim \exp\Big(\pm x^{\alpha\dot\alpha}\kappa_{\alpha}\tilde\lambda_{\dot\alpha}+[y\,\lambda]\Big)\,,\\
        \omega&\sim \langle\lambda|h|\lambda]\exp\Big(\pm x^{\alpha\dot\alpha}\kappa_{\alpha}\tilde\lambda_{\dot\alpha}+\langle y\,\lambda\rangle+[y\,\lambda]\Big)\,.
    \end{align}
\end{subequations}
These represent a family of plane-wave solutions for chiral HSGRA.
\subsection{Non-linear chiral-FDA}
Typically, when finding solutions for the non-linear equations of motion \eqref{eq:FDA}, one often consider the following expansion of fields:
\begin{subequations}
    \begin{align}
        \omega&=\omega^{(1)}+\omega^{(2)}+\omega^{(3)}+\ldots=\Omegabold+\underline{\omega}+\omega^{(3)}\ldots\,,\\
        \sC&=\sC^{(1)}+\sC^{(2)}+\sC^{(3)}+\ldots=\underline{\sC}+\sC^{(2)}+\sC^{(3)}+\ldots\,,
    \end{align}
\end{subequations}
where $ \Omegabold=\frac{1}{2}\varpibold_{\alpha\beta}L^{\alpha\beta}+\ebold_{\alpha\dot\alpha}P^{\alpha\dot\alpha}+\frac{1}{2}\widetilde\varpibold_{\dot\alpha\dot\beta}\widetilde L^{\dot\alpha\dot\beta},$ and solve \eqref{eq:FDA} order by order as in e.g. \cite{Didenko:2021vui,Didenko:2021vdb}. However, it is not always the case one can solve for $(\omega,\sC)$ to all orders and sum them up, since the computations quickly become quite involved. 


Remarkably, it turns out that the $0$-form $\sC=(C,B)$ in \eqref{eq:Cs-generating} and the $1$-form $\omega$ in \eqref{eq:omegas-generating} solve the full non-linear equations of motion of chiral HSGRA. Thus, they form a family of exact solutions in chiral HSGRA. Note that we have denoted $\omega^{(2)}\equiv \underline{\omega}=\eqref{eq:omega-s-generating}$ as well as $\sC^{(1)}=\underline{\sC}=(\underline{C},\underline{B})$, cf. \eqref{eq:Cs-generating}.

\paragraph{Free equations of motion.} As a consistency check, let us first derive free EOMs from \eqref{eq:FDA}. Observe that for any generating function $f(y,\tilde y)$, we have
\begin{align}
    f\star\Omegabold\Big|_{\Lambda=0}=(1+[\p_1\,\p_2])e^{\langle y\,\p_1\rangle+\langle y\,\p_2\rangle+[y\,\p_1]+[y\,\p_2]}f(y_1,\tilde y_1)\Omegabold(y_2,\tilde y_2)\Big|_{\substack{y_i=0\\ \tilde y_i=0}}\,.
\end{align}
Thus, at linear order in fluctuation, we have
\begin{subequations}\label{eq:free-FDA}
    \begin{align}
        d\uomega&=\cV(\Omegabold,\uomega)+\cV(\Omegabold,\Omegabold,\usC)\,,\\
        d\usC&=\cU(\Omegabold,\usC)\,,
    \end{align}
\end{subequations}
where $\cU(\omega,\usC)=U_{\omega,\usC}+U_{\usC,\omega}$ and
\begin{subequations}\label{sub-U2}
    \begin{align}
        U_{\omega,\usC}&=+\frac{1}{2}\mho_2 \exp\Big(\langle y\,\p_2\rangle + \langle \p_1\,\p_2\rangle \Big)\omega_1(y_1,\tilde y_1)\usC_2(y_2,\tilde y_2)\,,\\
        U_{\usC,\omega}&=-\frac{1}{2}\mho_2 \exp\Big(\langle y\,\p_1\rangle - \langle \p_1\,\p_2\rangle \Big)\usC_1(y_1,\tilde y_1)\omega_2(y_2,\tilde y_2)\,,
    \end{align}
\end{subequations}
can be obtained from the natural pairings,\footnote{recall that $\mho_2$ is defined in \eqref{mho-n-factor}.} cf. Table \ref{tab:natural-pairing}. It can be then checked that \eqref{eq:free-FDA} reproduces \eqref{linear-FDA-general}.

\paragraph{$(\uomega,\usC)$ are exact solutions.} It is useful noticing that as $\cV$'s and $\cU$'s vertices are constructed from the set of $\left\{\langle y\,\p_i\rangle\,,\langle\p_i\,\p_j\rangle\,,[y\,\p_i]\,,[\p_i\,\p_j]\right\}$ operators, the building blocks for fields will not change, i.e. fields are always functions of $\{\langle y\,\lambda\rangle,[y\,\lambda],\langle y\,\mu\rangle,\langle \lambda\,\p'\rangle,\tH_s,\psi_s\}$. 


\begin{proposition}\label{prop:eraseV(Omega,C)} Let $$\Omegabold \quad \text{and}\quad  \usC=\usC(\langle y\,\lambda\rangle,\langle y\,\mu\rangle,[y\,\lambda],N_y,\langle \lambda\,\p'\rangle,\tH_2,\psi_2)$$ be the field content of the spin-2 sector. Then, all higher-order vertices $\cV_{n\geq 4}(\Omegabold,\Omegabold,\usC,\ldots,\usC)$ and $\cU_{n\geq 3}(\Omegabold,\usC,\ldots,\usC)$ vanish. Thus, the spin-2 sector characterized by $(\Omegabold,\usC)$ forms a family of exact solutions in chiral HSGRA.
\end{proposition}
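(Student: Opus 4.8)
The plan is to reduce both statements to the explicit poly-differential form \eqref{V-subvertices} of the $A_{\infty}$ sub-vertices $V^{(n)}$ (the $L_{\infty}$ maps $\cV_n,\cU_n$ being graded symmetrizations of these) and to exploit two facts about our data: $\Omegabold$ is at most quadratic in the undotted variable $y$, since it is built from the quadratic generators $L_{\alpha\alpha},P_{\alpha\dot\alpha},\widetilde L_{\dot\alpha\dot\alpha}$; and every field depends on the spinor variables only through $\langle y\,\lambda\rangle$, $\langle y\,\mu\rangle$ and $[y\,\lambda]$, so that differentiating a field produces only $\lambda,\mu$ (undotted) or $\tilde\lambda$ (dotted). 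Consequently any contraction between two field-slots reduces to one of $\langle\lambda\,\lambda\rangle=0$, $\langle\mu\,\mu\rangle=0$, $\langle\mu\,\lambda\rangle=1$, or $[\tilde\lambda\,\tilde\lambda]=0$; in particular every dotted contraction $[\p_i\,\p_j]$ between two $\usC$-slots vanishes.

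For the $\cV_n$ maps I would use the prefactor $\langle\p_i\,\p_j\rangle^{n-2}$ of \eqref{V-subvertices}, whose two slots sit on the two $\Omegabold$ insertions. It differentiates the undotted sector of each $\Omegabold$ exactly $n-2$ times; since $\Omegabold$ has undotted degree at most two, the pair is annihilated whenever $n-2>2$, giving $\cV_{n\geq5}=0$ immediately. For the boundary value $n=4$ the operator $\langle\p_1\,\p_2\rangle^{2}$ forces both insertions onto their top (Lorentz) component $-\tfrac14\varpibold_{\alpha\beta}y^{\alpha}y^{\beta}$ and contracts them completely, producing $\varpibold^{\alpha\beta}\wedge\varpibold_{\alpha\beta}$, which vanishes identically by the antisymmetry of the wedge against the symmetry of the contracted index pair. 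Because this undotted factor is already zero, the dotted sector $\mho_n$ and the $\usC$ insertions need not be examined, and the same argument leaves $\cV_3$ untouched (there $\langle\p_1\,\p_2\rangle^{1}$ still permits a nontrivial contraction), consistently with $\cV_3$ surviving in \eqref{eq:free-FDA}.

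For the $\cU_n$ maps I would first note that the cyclic pairing of Table \ref{tab:natural-pairing} turns the prefactor $\langle\p_i\,\p_j\rangle^{n-2}$ of $V^{(n)}$ into a factor $\langle\p_\omega\,\p_{\mathrm{out}}\rangle^{n-2}$ carrying $n-2$ undotted derivatives on the single $\Omegabold$; since $\Omegabold$ again has undotted degree at most two, this already gives $\cU_{n\geq5}=0$. The genuinely new cases are $\cU_3$ and $\cU_4$, which carry only one or two such derivatives and are therefore not killed by degree, while the wedge-antisymmetry trick of the $\cV_4$ case is unavailable (there is only one $\Omegabold$). Here I would analyze the remaining channels directly: every $\usC$–$\usC$ contraction is killed by $[\tilde\lambda\,\tilde\lambda]=0$ (dotted) or $\langle\lambda\,\lambda\rangle=0$ (undotted, through the $C$-type and $\tH_2$-carrying pieces), and every contraction of a $\usC$ against the spin-2 \emph{fluctuation} part of $\Omegabold$ — which carries $\lambda_\alpha\lambda_\beta$ or $\tilde\lambda_{\dot\alpha}\tilde\lambda_{\dot\beta}$ — collapses by the same null identities, so that only the \emph{background} vierbein $h_{\alpha\dot\alpha}$ can survive inside the vertex, i.e. the data already assembled into the deformed derivative $D$ at the free level in $\cU_2$.

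The step I expect to be the main obstacle is controlling the order-zero (pure-shift) residues of the exponentiated contractions for $\cU_3$ and $\cU_4$. Establishing that a contraction vanishes does not remove the identity term of the corresponding $\exp[\,\cdot\,]$, so one cannot conclude from $[\tilde\lambda\,\tilde\lambda]=0$ alone; instead one must combine the null identities with the explicit simplex integral $\int_{\Delta_{n-2}}$ and the sum over admissible $\omega$-positions to show that the surviving background contractions reorganize into $D$ acting on the linear data rather than into an independent vertex. For $\cV_{n\geq4}$ this difficulty never arises, because the undotted prefactor is identically zero before any shift is applied; it is precisely the single-$\Omegabold$ vertices $\cU_3$ and $\cU_4$ where this residue bookkeeping must be carried out in full to complete the reduction to the free equations \eqref{eq:free-FDA}.
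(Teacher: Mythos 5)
Your treatment of $\cV_{n\geq4}$ coincides with the paper's: the forced prefactor $\langle\p_i\,\p_j\rangle^{n-2}$ lands on the two copies of $\Omegabold$, which are at most quadratic in the undotted $y$, killing $n\geq5$ by degree counting and $n=4$ via $\varpibold^{\alpha\alpha}\wedge\varpibold_{\alpha\alpha}=0$. The degree count that disposes of $\cU_{n\geq5}$ (at most two undotted derivatives can act on the single $\Omegabold$) is also the paper's. The gap is in $\cU_3$ and $\cU_4$, and it is genuine on two counts. First, the mechanism you propose for these cases --- null contractions $\langle\lambda\,\lambda\rangle=[\tilde\lambda\,\tilde\lambda]=0$ forcing every channel involving the fluctuation part of $\Omegabold$ or a $\usC$ slot to collapse --- does not close: $\usC$ depends on $\langle y\,\mu\rangle$ through the $B$-sector, cf. \eqref{psi-s} and \eqref{eq:B-generating}, so an undotted contraction of a $\usC$ slot against the fluctuation $\omega_{\alpha\alpha}\propto\lambda_\alpha\lambda_\alpha$ produces $\langle\mu\,\lambda\rangle=1$, not zero. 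Second, you correctly observe that the identity term of each exponential survives any such null identity and you explicitly defer that step (the ``residue bookkeeping'') --- but that step is the entire content of the proof for $n=3,4$, so what remains is a statement of what must be shown rather than a proof.

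The paper's resolution is also not the one you anticipate: the surviving pieces do not ``reorganize into $D$ acting on the linear data'' (that is entirely the job of $\cU_2$, which is retained in \eqref{eq:free-FDA}); rather, $\cU_3$ and $\cU_4$ vanish identically as sums. Concretely, the prefactor $\langle y\,\p_i\rangle$ on the $\Omegabold$ slot projects onto the $\varpibold_{\alpha\alpha}$ and $\ebold_{\alpha\dot\alpha}$ components (the $\widetilde\varpibold$ component carries no undotted $y$), and the $A_\infty$ sub-vertices $U_{\usC,\ldots,\Omegabold_i,\ldots,\usC}$, once $\Omegabold$ is brought to a canonical first position (picking up only signs in the exponents, since the $\usC$'s are $0$-forms) and the simplex variables are exchanged, $u\leftrightarrow v$, cancel \emph{pairwise} in the sum over the position $i$ of $\Omegabold$ --- separately for the $\varpibold$ and $\ebold$ channels. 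Without exhibiting this pairwise cancellation, or an equivalent structural argument, the claim $\cU_3=\cU_4=0$ is unestablished in your write-up.
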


\begin{proof} Let us first consider $\cV_{n\geq 4}(\Omegabold,\Omegabold,\usC,\ldots,\usC)$. From \eqref{V-subvertices}, it is easy to notice that there are \emph{forced} contractions between the background connection 1-form $\Omegabold$'s in the $y$-sector via the operations $\langle \p_i\,\p_j\rangle^{n-2}$ with $n$ being the number of fields entering the sub-vertices $V$. It is easy to notice that $\langle \p_i\,\p_j\rangle^{n-2}\Omegabold_i\Omegabold_j=0$ whenever $n>4$. For the case $n=4$, the contraction $\langle \p_i\,\p_j\rangle^2$ give\footnote{The $\wedge$-product is suppressed temporally for simplicity.}
\begin{align}
    \varpibold^{\alpha\alpha}\varpibold_{\alpha\alpha}=(\varpi+\omega)^{\alpha\alpha}(\varpi+\omega)_{\alpha\alpha}=0\,.
\end{align}
Therefore, $\cV_{n\geq 4}(\Omegabold,\Omegabold,\usC,\ldots,\usC)=0$. Note that the cubic vertex $\cV(\Omegabold,\Omegabold,\usC)$ is needed for consistency as it produces the SD components of the Weyl tensor cf. \eqref{eq:Rbold2}. 

Next, we turn our focus to $\cU_{n\geq 3}(\Omegabold,\usC,\ldots,\usC)$ vertices, which can be obtained via the $\cU$-$\cV$ duality maps, cf. \eqref{UV-duality-maps}. For $n>5$, each sub-vertex $U$ consists of a forced contraction of type $\langle y\,\p_i\rangle^{n-2}$ where $i$ is the position of $\Omegabold$ in $U_{\usC,\ldots,\usC,\Omegabold_i,\ldots,\usC}$. These sub-vertices clearly vanish for $n\geq 5$. Therefore, we only need to pay special attention to the cases where $n=3,4$. 

$\bullet$ For $n=3$, we have $\cU(\Omegabold,\usC,\usC)=U_{\Omegabold,\usC,\usC}+U_{\usC,\Omegabold,\usC}+U_{\usC,\usC,\Omegabold}$ where \cite{Sharapov:2022nps}
\begin{subequations}\label{U3}
\begin{align}
    U_{\Omegabold,\usC,\usC}&=+\mho_3\langle y \,\p_1\rangle\int e^{+(1-u)\langle y \,\p_3\rangle +(1-v)\langle \p_1 \,\p_3\rangle +u\langle y\,\p_2\rangle+v\langle \p_1\,\p_2\rangle }\,,\\
    U_{\usC,\Omegabold,\usC}&=-\mho_3\langle y\,\p_2\rangle \int e^{-(1-u)\langle  \p_1\,\p_2\rangle +(1-v)\langle y\, \p_1\rangle +u\langle \p_2\,\p_3\rangle +v\langle y\,\p_3\rangle}+ (v\leftrightarrow u)\,,\\
    U_{\usC,\usC,\Omegabold}&=+\mho_3\langle y\,\p_3\rangle \int e^{-(1-v)\langle \p_1\, \p_3\rangle +(1-u)\langle y \,\p_1\rangle -v\langle \p_2\,\p_3\rangle +u\langle y\,\p_2\rangle}\,.
\end{align}
\end{subequations}
Here, $\mho_3=\exp\big([y\,\p_1]+[y\,\p_2]+[y\,\p_3]+[\p_1\,\p_2]+[\p_1\,\p_3]+[\p_2\,\p_3]\big)$ (cf., \eqref{mho-n-factor}). We notice that there will be no contraction between $\usC$'s in the $\tilde y$ sector because the $\star$-product \eqref{eq:star-product} acts trivially on $\usC_i$ due to contractions between the principal spinors $\tilde\lambda$'s.\footnote{This is how one imposes locality into the formally consistent FDA \eqref{eq:FDA} making it becomes a non-trivial system that describes chiral HSGRA, cf. \cite{Sharapov:2022faa,Sharapov:2022awp,Sharapov:2022nps}.} Note that the prefactors $\langle y\, \p_i\rangle $ in each of the $U$ sub-vertices act on $\Omegabold$, so only $\varpibold_{\alpha\alpha}$ and $\ebold_{\alpha\dot\alpha}$ will give non-trivial contributions. 

 $-$ \underline{$\varpibold$'s contributions.} In this case, $\mho_3$ acts trivially. Thus, all $U$'s vertices reduce to
    \begin{subequations}
    \begin{align*}
    U_{\varpibold,\usC,\usC}&=+\int e^{+(1-u)\langle y\, \p_3\rangle +(1-v)\langle \p_1 \,\p_3\rangle +u\langle y\,\p_2\rangle+v\langle \p_1\,\p_2\rangle }(\varpibold_{1\alpha\alpha}y^{\alpha}y_1^{\alpha},\usC_2(y_2,\tilde y),\usC_3(y_3,\tilde y))\,,\\
    U_{\usC,\varpibold,\usC}&=-\int e^{-(1-u)\langle  \p_1\,\p_2\rangle +(1-v)\langle y \,\p_1\rangle +u\langle \p_2\,\p_3\rangle +v\langle y\,\p_3\rangle}(\usC_1(y_1,\tilde y),\varpibold_{2\alpha\alpha}y^{\alpha}y_2^{\alpha},\usC_3(y_3,\tilde y))\\
    &\quad -\int e^{-(1-v)\langle  \p_1\,\p_2\rangle +(1-u)\langle y\, \p_1\rangle +v\langle \p_2\,\p_3\rangle +u\langle y\,\p_3\rangle}(\usC_1(y_1,\tilde y),\varpibold_{2\alpha\alpha}y^{\alpha}y_2^{\alpha},\usC_3(y_3,\tilde y))\,,\\
    U_{\usC,\usC,\varpibold}&=+\int e^{-(1-v)\langle \p_1 \,\p_3\rangle +(1-u)\langle y\, \p_1\rangle -v\langle \p_2\,\p_3\rangle +u\langle y\,\p_2\rangle}(\usC_1(y_1,\tilde y),\usC_2(y_2,\tilde y),\varpibold_{3\alpha\alpha}y^{\alpha}y_3^{\alpha})\,.
    \end{align*}
    \end{subequations}
    To see the pattern of cancellation, we can bring $\varpibold_i$ in to the first position in $U$. Since $C$'s are 0-forms, the act of swapping positions of fields only provides minus signs in the exponential. Let the canonical ordering of fields in $\cU$ be $\varpibold,\usC,\usC$, we get
    \begin{align}\label{varpi-C-C}
        \cU(\varpibold,\usC,\usC)=&+\int e^{(1-u)\langle y\,\p_3\rangle + u\langle y\,\p_2\rangle }\big[2+(1-v)\langle \p_1\,\p_3\rangle + v\langle \p_1\,\p_2\rangle\big](\varpibold_1,\usC_2,\usC_3)\nn\\
        &-\int e^{(1-v)\langle y\,\p_1\rangle +v\langle y\,\p_3\rangle }\big[2+(1-u)\langle \p_2\,\p_1\rangle +u\langle \p_2\,\p_3\rangle \big](\varpibold_2,\usC_1,\usC_3)\nn\\
        &-\int e^{(1-u)\langle y\,\p_1\rangle +u\langle y\,\p_3\rangle }\big[2+(1-v)\langle \p_2\,\p_1\rangle +v\langle \p_2\,\p_3\rangle \big](\varpibold_2,\usC_1,\usC_3)\nn\\
        &+\int e^{(1-u)\langle y\,\p_1\rangle +u\langle y\,\p_2\rangle }\big[2+(1-v)\langle \p_3\,\p_1\rangle +v\langle \p_3\,\p_2\rangle \big](\varpibold_3,\usC_1,\usC_2)\,.\nn
    \end{align}
    By simple change in variables, say $u\leftrightarrow v$ in the first line of \eqref{varpi-C-C}, we see that all sub-vertices of $\cU$ add up to zero since they cancel each other pairwise. We conclude that $\cU(\varpibold,\usC,\usC)=0$, which is a beautiful consequence of chiral HSGRA being a Lorentz-invariant theory.

    $-$ \underline{$\ebold$ contributions.} In the case where $\cU(\Omegabold,\usC,\usC)\sim \cU(\ebold,\usC,\usC)$, 
    we obtain
    \begin{subequations}
        \begin{align*}
            U_{\ebold,\usC,\usC}&=+\Xi_3\big(2+[\p_1\,\p_2]+[\p_1\,\p_3]\big)\int e^{+(1-u)\langle y \,\p_3\rangle +u\langle y\p_2\rangle}(\ebold_{1\alpha\dot\alpha}y^{\alpha}\tilde y_1^{\dot\alpha},\usC_2,\usC_3)\,,\\
             U_{\usC,\ebold,\usC}&=-\Xi_3\big(2-[\p_2\,\p_1]+[\p_2\,\p_3]\big)\int e^{+(1-v)\langle y\, \p_1\rangle +v\langle y\p_3\rangle}(\ebold_{2\alpha\dot\alpha}y^{\alpha}\tilde y_2^{\dot\alpha},\usC_1,\usC_3)+ (v\leftrightarrow u)\,,\\
            U_{\usC,\usC,\ebold}&=+\Xi_3\big(2-[\p_3\,\p_1]-[\p_3\,\p_2]\big)\int e^{+(1-u)\langle y \,\p_1\rangle +u\langle y\,\p_2\rangle}(\ebold_{3\alpha\dot\alpha}y^{\alpha}\tilde y_3^{\dot\alpha},\usC_1,\usC_2)\,.
        \end{align*}
    \end{subequations}
\normalsize
where $ \Xi_3=\exp\big([y\,\p_1]+[y\,\p_2]+[y\,\p_3]\big)$. Observe that all $U_{\ebold,\usC,\usC}$ vertices cancel pairwise.

$\bullet$ For $n=4$ case, where
\begin{align*}   
    \cU(\Omegabold,\usC,\usC,\usC)=U_{\varpibold,\usC,\usC,\usC}+U_{\usC,\varpibold,\usC,\usC}+U_{\usC,\usC,\varpibold,\usC}+U_{\usC,\usC,\usC,\varpibold}\,,
\end{align*}
and \cite{Sharapov:2022nps} 
\small
\begin{subequations}\label{the-quartics}
\begin{align}
    U_{\varpibold,\usC,\usC,\usC}&=+\langle y\p_1\rangle^2\int e^{(1-u_1-u_2)\langle y\p_4\rangle + (1-v_1-v_2)\langle \p_1\p_4\rangle + u_1\langle y\p_2\rangle + u_2\langle y\p_3\rangle +v_1\langle \p_1\p_2\rangle + v_2\langle \p_1\p_3\rangle }\\
    U_{\usC,\varpibold,\usC,\usC}&=-\langle y\p_2\rangle^2\int e^{(1-u_1-u_2)\langle y\p_4\rangle +(1-v_1-v_2)\langle \p_2\p_4\rangle + u_2\langle y\p_1\rangle + u_1\langle y\p_3\rangle -v_2\langle \p_1\p_2\rangle +v_1\langle \p_2\p_3\rangle }\nn\\
    & \quad -\langle y\p_2\rangle^2\int e^{(1-u_1-u_2)\langle y\p_4\rangle +(1-v_1-v_2)\langle \p_2\p_4\rangle + u_1\langle y\p_1\rangle + u_2\langle y\p_3\rangle -v_1\langle \p_1\p_2\rangle +v_2\langle \p_2\p_3\rangle }\nn\\
    &\quad -\langle y\p_2\rangle^2\int e^{(1-u^R-v^L)\langle y\p_4\rangle +(1-u^L-v^R)\langle \p_2\p_4\rangle + v^L\langle y\p_1\rangle + u^R\langle y\p_3\rangle -u^L\langle \p_1\p_2\rangle +v^R\langle \p_2\p_3\rangle }\\
    U_{\usC,\usC,\varpibold,\usC}&=+\langle y\p_3\rangle^2\int e^{(1-u_1-u_2)\langle y\p_4\rangle + (1-v_1-v_2)\langle \p_3\p_4\rangle +u_2\langle y\p_1\rangle +u_1\langle y\p_2\rangle -v_2\langle \p_1\p_3\rangle -v_1\langle \p_2\p_3\rangle}\nn\\
    &\quad +\langle y\p_3\rangle^2\int e^{(1-v_1-v_2)\langle y\p_4\rangle + (1-u_1-u_2)\langle \p_3\p_4\rangle +v_1\langle y\p_1\rangle +v_2\langle y\p_2\rangle -u_1\langle \p_1\p_3\rangle -u_2\langle \p_2\p_3\rangle}\nn\\
    &\quad +\langle y\p_3\rangle^2\int e^{(1-u^R-v^L)\langle y\p_4\rangle + (1-u^L-v^R)\langle \p_3\p_4\rangle +v^L\langle y\p_1\rangle +u^R\langle y\p_2\rangle -u^L\langle \p_1\p_3\rangle -v^R\langle \p_2\p_3\rangle}\\
    U_{\usC,\usC,\usC,\varpibold}&=+\langle y\p_4\rangle^2\int e^{-(1-v_1-v_2)\langle\p_1\p_4\rangle + (1-u_1-u_2)\langle y\p_1\rangle -v_2\langle \p_2\p_4\rangle +u_2\langle y\p_2\rangle -v_1\langle \p_3\p_4\rangle +u_1\langle y\p_3\rangle }
\end{align}
\end{subequations}
\normalsize
with $(u_i,v_i,u^L,u^R,v^L,v^R)$ are certain coefficients that are not crucial in the followings.

Again, let the canonical ordering of the quartic in the spin-2 sector be $(\varpibold,\usC,\usC,\usC)$. We can ignore the pre-factors $\langle y\,\p_i\rangle^2$ of each $U$'s sub-vertices since they only produce $\varpibold_{\alpha\alpha}y^{\alpha(2)}$. It can then be shown in a simple manner that all $U$ sub-vertices cancel pairwise after some repositioning. 



Therefore, $\cV_{n \geq 4}(\Omegabold,\Omegabold,C,\ldots,C)=0$ and $\cU_{n\geq 3}(\Omegabold,C,\ldots,C)=0$ as claimed.  
\end{proof}



Let us now state our main results. 
\begin{theorem}\label{thm:eraseV(omega,omega,C)HS} Given $\uomega=\uomega([y\,\lambda],\langle y\,\lambda\rangle ,\langle \lambda\,\p'\rangle,\tH)$, then $\cV_{n\geq 3}(\uomega,\uomega,\usC,\ldots,\usC)=0$. 
\end{theorem}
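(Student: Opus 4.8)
The plan is to exploit the single structural feature that distinguishes the higher-spin one-form fluctuation $\uomega$ from the full spin-2 background connection $\Omegabold$: by hypothesis $\uomega$ depends on the undotted variable $y$ \emph{only} through the combination $\langle y\,\lambda\rangle=y^{\alpha}\lambda_{\alpha}$, while its remaining arguments $[y\,\lambda]$, $\langle\lambda\,\p'\rangle$ and $\tH$ live in the dotted sector or act on the auxiliary $\tH(y')$-slot. Consequently every undotted derivative extracts a principal spinor,
\begin{align}
    \p_{i\alpha}\,\uomega(y_i,\tilde y_i)=\lambda_{\alpha}\,\big(\partial_{\langle y_i\lambda\rangle}\uomega\big)\,,
\end{align}
i.e. $\p_{i\alpha}\uomega_i\propto\lambda_{\alpha}$. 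This is the one fact on which the whole argument rests.

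First I would recall from \eqref{V-subvertices} and its canonical representative \eqref{V-subvertices-canonical} that every sub-vertex $V^{(n)}$ entering $\cV_n(\uomega,\uomega,\usC,\ldots,\usC)$ carries the \emph{forced} prefactor $\langle\p_i\,\p_j\rangle^{n-2}$, where $i<j$ are the slots of the two one-forms, and that this is the only source of $\p_i$--$\p_j$ contractions in the undotted sector (inside the integral the operator contracts $\p_i$ and $\p_j$ only against the $\sC$-slots $\p_m$ or against the external $y$). Since all operators appearing are constant-coefficient in the $y_k$ and hence mutually commuting --- in particular $[\p_i^{\alpha},\langle y\,\p_i\rangle]=0$, so the $\p_i$ commute through every shift operator --- I may push all $\p_i$ of the prefactor so that they act directly on $\uomega_i$ and all $\p_j$ so that they act on $\uomega_j$.

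Then comes the core step: for $n\geq 3$ the exponent $n-2\geq 1$, so at least one factor $\langle\p_i\,\p_j\rangle=\p_i^{\alpha}\p_{j\alpha}$ is present, and using the identity displayed above this contraction yields
\begin{align}
    \langle\p_i\,\p_j\rangle\,\uomega_i\,\uomega_j\;\propto\;\lambda^{\alpha}\lambda_{\alpha}\,(\ldots)=\langle\lambda\,\lambda\rangle\,(\ldots)=0\,,
\end{align}
because $\langle\lambda\,\lambda\rangle=\lambda^{\alpha}\lambda_{\alpha}=0$ by antisymmetry of the $\epsilon$-contraction. Hence every sub-vertex $V^{(n)}$ with two $\uomega$-insertions vanishes identically, and since $\cV_n$ is the graded symmetrization of these $V^{(n)}$ (cf. the discussion around \eqref{V-subvertices}), I conclude $\cV_{n\geq 3}(\uomega,\uomega,\usC,\ldots,\usC)=0$.

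Finally I would flag the point that deserves care, which simultaneously explains why the result is sharper than Proposition \ref{prop:eraseV(Omega,C)}: there the cubic vertex $\cV(\Omegabold,\Omegabold,\usC)$ was \emph{nonzero} and indeed necessary to source $C_{\dot\alpha(4)}$, precisely because $\Omegabold$ contains the vierbein piece $\ebold_{\alpha\dot\alpha}$ and the flat connection $\varpi_{\alpha\alpha}$, whose undotted derivatives are not proportional to $\lambda$. The substance of the present theorem is that for the pure higher-spin one-form fluctuation the undotted dependence collapses entirely onto $\langle y\,\lambda\rangle$, so the $\langle\lambda\,\lambda\rangle=0$ mechanism already annihilates the cubic term. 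The only thing left to verify is therefore that $\uomega$ genuinely has no undotted $y$-dependence beyond $\langle y\,\lambda\rangle$ --- which follows from the closed form \eqref{eq:omega-s-generating}, where the sole undotted building block is $\langle y\,\lambda\rangle$ while $[y\,\lambda]$ sits in the dotted sector --- and that the action of $\langle\lambda\,\p'\rangle$ on the auxiliary slot does not reintroduce such a dependence after setting $y'=0$. I expect this bookkeeping, rather than any genuine computation, to be the only real obstacle.
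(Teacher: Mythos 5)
Your proof is correct and is essentially the paper's own argument, spelled out in more detail: the paper likewise observes that the forced prefactor $\langle\p_i\,\p_j\rangle^{n-2}$ acting on two $\uomega$'s, each of which depends on the undotted $y$ only through $\langle y\,\lambda\rangle$, produces $\langle\lambda\,\lambda\rangle=0$. Your additional bookkeeping (commuting the derivatives through the shift operators, checking that \eqref{eq:omega-s-generating} has no undotted dependence beyond $\langle y\,\lambda\rangle$, and contrasting with the nonvanishing cubic in Proposition \ref{prop:eraseV(Omega,C)}) is accurate and simply makes explicit what the paper leaves implicit.
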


\begin{proof} Recall that $\cV_n(\uomega,\uomega,\usC,\ldots,\usC)$ vertices have $\langle \p_i\,\p_j\rangle^{n-2}$ as prefactors where $i$th and $j$th are the positions of $\uomega_i,\uomega_j$ in the sub-vertices $V$'s. As $\uomega_{i}$ are functions of $\langle y\,\lambda_i\rangle$, it is clear that $\cV_n=0$ for $n\geq 3$. 
\end{proof}

\begin{theorem}\label{thm:eraseU(omega,C)HS} For $(\uomega,\usC)$ defined as above, all vertices $\cU_{n\geq 2}(\uomega,\usC,\ldots,\usC)=0$. 
\end{theorem}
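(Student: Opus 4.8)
The plan is to obtain the vanishing of each $\cU_{n}(\uomega,\usC,\ldots,\usC)$ from the vanishing of $\cV_{n}(\uomega,\uomega,\usC,\ldots,\usC)$ already established in Theorem~\ref{thm:eraseV(omega,omega,C)HS}, by exploiting the $\cU$-$\cV$ duality of Table~\ref{tab:natural-pairing}. The cyclic pairing gives, for each $n$,
\begin{align*}
\big\langle \uomega\,\big|\,\cU_{n}(\uomega,\usC,\ldots,\usC)\big\rangle=\big\langle \cV_{n}(\uomega,\uomega,\usC,\ldots,\usC)\,\big|\,\usC\big\rangle\,,
\end{align*}
whose right-hand side vanishes for all $n\geq 3$. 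Structurally this is the same mechanism as in Theorem~\ref{thm:eraseV(omega,omega,C)HS}: under the substitution $V(\p_{1},\ldots,\p_{n}|y)\mapsto U(y,\p_{1},\ldots,\p_{n-1}|-\p_{n})$, the prefactor $\langle\p_{i}\,\p_{j}\rangle^{n-2}$ that forces a contraction between the two one-form legs of $\cV_{n}$ is carried over to $\cU_{n}$, where it now contracts the single leg $\uomega$ against the leg paired out of the vertex; since both depend on the undotted variables only through $\langle y\,\lambda\rangle$, this contraction produces $\langle\lambda\,\lambda\rangle^{n-2}=0$.

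First I would dispatch the base case $n=2$, which lies outside the range of Theorem~\ref{thm:eraseV(omega,omega,C)HS}. Here $\cU_{2}(\uomega,\usC)=U_{\uomega,\usC}+U_{\usC,\uomega}$ is given explicitly by \eqref{sub-U2}, and under the duality it is governed by $\cV_{2}(\uomega,\uomega)=\tfrac12[\uomega,\uomega]_{\star}$. Because $\uomega$ is proportional to the single one-form $\langle\lambda|h|\lambda]$, both the wedge $\langle\lambda|h|\lambda]\wedge\langle\lambda|h|\lambda]=0$ and the leading $\star$-contraction $\langle\p_{1}\,\p_{2}\rangle\propto\langle\lambda\,\lambda\rangle=0$ vanish, so $\cV_{2}(\uomega,\uomega)=0$ and hence $\cU_{2}(\uomega,\usC)=0$.

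The main obstacle is that the duality as used above annihilates the \emph{pairing} of $\cU_{n}$ against a one-form drawn from the same family, whereas the statement requires $\cU_{n}(\uomega,\usC,\ldots,\usC)$ to vanish \emph{as a $0$-form}, so that the linearised data genuinely solve the nonlinear equations of motion \eqref{eq:FDA}. I expect closing this gap to be the crux, since per-sub-vertex the inherited prefactor is $\langle y\,\p\rangle^{n-2}$ acting on $\uomega$, which produces $\langle y\,\lambda\rangle^{n-2}\neq 0$ rather than $\langle\lambda\,\lambda\rangle^{n-2}$; the vanishing therefore comes only after summation. To establish it I would reproduce, at arbitrary $n$, the pairwise-cancellation argument carried out for $\cU(\Omegabold,\usC,\usC)$ in the proof of Proposition~\ref{prop:eraseV(Omega,C)}: after cycling each $\uomega$ into a reference position and relabelling the integration variables $u_{k},v_{k}$, the sub-vertices cancel in pairs, the cancellations being precisely those dictated by the cyclic $A_{\infty}$ identities behind $\cV_{n}=0$. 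What makes this feasible for all spins is that the dotted sector trivialises: every $\usC$ and $\uomega$ carries a factor of $\tilde\lambda$, so all $[\p_{i}\,\p_{j}]$ contractions inside $\mho_{n}$ vanish by $[\tilde\lambda\,\tilde\lambda]=0$, leaving a purely undotted, finite combinatorial cancellation of the form already verified at $n=3,4$ in Proposition~\ref{prop:eraseV(Omega,C)}.
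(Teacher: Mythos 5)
Your final mechanism is the one the paper uses: pull out the common undotted prefactor $\langle y\,\p_i\rangle^{n-2}\uomega_i\propto\langle y\,\lambda\rangle^{n-2}$, observe that all dotted contractions $[\p_i\,\p_j]$ die on $[\tilde\lambda\,\tilde\lambda]=0$ so that $\mho_n$ trivialises, and then invoke the pairwise cancellation of sub-vertices after cycling $\uomega$ to a reference slot, exactly as in the proof of Proposition~\ref{prop:eraseV(Omega,C)}. The paper's own proof is no more explicit than yours — it verifies $n=2,3,4$ and asserts the general pattern — so on that score you match it.

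One step of yours is not valid as written: the inference ``$\cV_{2}(\uomega,\uomega)=0$ and hence $\cU_{2}(\uomega,\usC)=0$.'' The duality only yields $\big\langle \omega'\,\big|\,U(\uomega,\usC)\big\rangle=\big\langle V(\omega',\uomega)\,\big|\,\usC\big\rangle$ for arbitrary test forms $\omega'$, and $V(\omega',\uomega)$ does not vanish for generic $\omega'$; knowing only $V(\uomega,\uomega)=0$ pins down the single pairing $\langle\uomega|U(\uomega,\usC)\rangle$, which by non-degeneracy tells you nothing about $U(\uomega,\usC)$ itself. This is precisely the obstruction you articulate in your last paragraph for $n\geq 3$, and it applies equally at $n=2$. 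The correct disposal of the base case is the direct check of \eqref{sub-U2}: relabel $1\leftrightarrow 2$ in $U_{\usC,\uomega}$, note that $\mho_2$ becomes symmetric once $[\p_1\,\p_2]$ annihilates the solution, and the two sub-vertices cancel — i.e.\ the $n=2$ instance of the very pairwise cancellation you propose for general $n$. With that substitution your argument coincides with the paper's.
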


\begin{proof} 
Consider $\cU_{n\geq 2}(\uomega,\usC,\ldots,C)$ whose sub-vertices $U$'s carry $\langle y\,\p_i\rangle^{n-2}$ as pre-factors. Then, all $U$'s will have the same pre-factors which are proportional to $(\langle y\,\lambda\rangle^{n-2}\times \ldots)$. 
At this stage, we can repeat the proof of Proposition \ref{prop:eraseV(Omega,C)} where we take the canonical ordering of fields in each $U$'s sub-vertices to be $(\uomega,\usC,\ldots,\usC)$. The claim is that all $U$'s vertices will cancel pairwise based on how vertices of chiral HSGRA were designed \cite{Sharapov:2022faa,Sharapov:2022awp,Sharapov:2022nps}. This statement can be checked explicitly with $\cU(\uomega,\usC)=U_{\uomega,\usC}+U_{\usC,\uomega}$, cf. \eqref{sub-U2}. 
It is analogous for the cubic \eqref{U3} (where we replace $\Omegabold$ by $\uomega$), and the quartic \eqref{the-quartics}. 
\end{proof}

With the above results, it is clear that the complete non-linear FDA is solved by $(\uomega,\usC)$. Thus, the non-linear FDA solved by $(\uomega,\usC)$ truncates to \eqref{eq:free-FDA} as it is not necessary to solve for higher-order fluctuations. 


\subsection{Effective action for chiral HSGRA on SD pp-wave background}\label{sec:action}
We now show that the effective action associated to the above exact solutions is precisely the usual kinetic action for chiral higher-spin fields on any SD background. 

 In what follows, we denote
\begin{align}
    \cF(\cA):=\Big(d\omega-\cV(\Omegabold,\omega)\Big)\Big|_{\tilde y=0}=D\cA\,,
\end{align}
to be the `effective curvature tensor' for chiral HSGRA, since we have shown all higher-order vertices vanish on the solutions. 
Then, the effective action associated to the above field strength reads
\begin{align}\label{eq:chiral-sigma-action}
    S_{\chi HSGRA}^{SD\, pp-wave}=\int_{\cM}\langle \sC\,|\,\cHbold\wedge D \cA\rangle_y\,,\qquad \cHbold:=\ebold_{\alpha\dot\alpha}\ebold_{\alpha}{}^{\dot\alpha}y^{\alpha}y^{\alpha}\,.
\end{align}
where $\langle\,\rangle_y$ stands for the operation of integrating out $y$ to produce appropriate contractions, cf. \cite{Krasnov:2021nsq}. We obtain
\small
\begin{align}
    S[B,A]=\frac{1}{2}\sum_{s=1}^{\infty}\int \left\langle \cB; \cHbold \wedge D\cA\right\rangle_y = \frac{1}{2}\sum_{s=1}^{\infty}\int \sqrt{g}B^{\alpha(2s)}D_{\alpha}{}^{\dot\alpha}A_{\alpha(2s-1)\,\dot\alpha}\,. 
\end{align}
\normalsize
Thus, the proposed action provides the kinetic terms for all chiral higher-spin fields on a generic SD background, as promised.




\section{Discussion}\label{sec:discussion}
In this paper, we obtain a class of exact self-dual pp-wave solutions, in chiral HSGRA by considering a SD modification of the Kerr-Schild ansatz, cf. \eqref{eq:chiral-ansatz-1}. 
We then generalize our analysis to the case of higher spins. For completeness, we show that the spacetime action of the obtained solutions reduces to the standard kinetic action for free fields on any SD background. 

Given that there are other self-dual solutions such as self-dual Taub-NUT black hole (see e.g. \cite{Hawking:1976jb,Boyer:1982mm,Adamo:2023fbj}), BPST instantons \cite{Belavin:1975fg}, or ADHM instantons \cite{Atiyah:1978ri}, we expect that there will be more exact solutions in chiral HSGRA, which are higher-spin generalizations of the aforementioned self-dual solutions.

The reason we believe there should be more classes of exact solutions in chiral HSGRA is that this theory is integrable in the sense of Bardeen \cite{Bardeen:1995gk}, see e.g. \cite{Ponomarev:2017nrr} and \cite{Monteiro:2022xwq} for relevant discussion. Therefore, chiral HSGRA should be solvable. This speculation is backed by the trivialization of the scattering amplitudes of chiral HSGRA in flat space, cf. \cite{Skvortsov:2018jea,Skvortsov:2020wtf,Skvortsov:2020gpn}, and some recent twistor construction of the theory or some closed subsectors of it, see e.g. \cite{Tran:2021ukl,Tran:2022tft,Herfray:2022prf,Neiman:2024vit}.

It will be interesting to check whether all self-dual/chiral higher-spin theories possess this property. Note that, even if some of these theories may have quantum anomalies, as discussed in \cite{Monteiro:2022xwq}, it is still possible to make them quantum integrable through a Green-Schwarz-like mechanism, as shown in \cite{Costello:2015xsa, Costello:2019jsy}. In particular, by introducing suitable axionic interactions in twistor space \cite{Costello:2021bah, Bittleston:2022nfr}, one should be able to cure any quantum anomalies that arise in self-dual/chiral theories. A beautiful discussion related to this story in spacetime can also be found, e.g., in \cite{Monteiro:2022nqt}. 

While the analysis of chiral HSGRA is relatively straightforward in $\Lambda=0$ case,\footnote{See also \cite{Basile:2024raj} for a higher-dimensional and \cite{Ponomarev:2024jyg} for a double-copy generalization of the theory.} it is more intricate in the case of (A)dS, where higher-order vertices may not vanish in a trivial way, as observed in \cite{Metsaev:2018xip,Skvortsov:2018uru,Sharapov:2022awp}. The reason for this is that the linear FDA \eqref{linear-FDA-general} contains additional terms associated with the cosmological constant, making it difficult to determine the appropriate ansatz for solving the linearized equation of motion. (In particular, when $\Lambda\neq 0$, one may need to introduce an additional pair of principal spinors to solve the linear FDA.) This fact makes the task of construction a covariant action for chiral HSGRA difficult. Nevertheless, one may hope that the light-cone ansatz introduced in \cite{Adamo:2022lah,Neiman:2023bkq} will provide further insight in guessing the correct action for chiral HSGRA due to the resulting simple form of higher-order vertices after substituting them in. See \cite{Chowdhury:2024dcy,Krasnov:2024qkh} for some work along this line.




\section*{Acknowledgement}
Many discussions with Yasha Neiman and Zhenya Skvortsov are gratefully acknowledged. The author also thanks an anonymous referee for useful suggestions. 
This work is supported by the Young Scientist Training (YST) program at the Asia Pacific Center of Theoretical Physics (APCTP) through the Science and Technology Promotion Fund and Lottery Fund of the Korean Government, and also the Korean Local Governments
– Gyeongsangbuk-do Province and Pohang City. 
The author also appreciates the support from the Fonds de la Recherche Scientifique under grant number F.4503.20 (HighSpinSymm), grant number F.4544.21 (HigherSpinGraWave) and funding from the European Research Council (ERC) under grant number 101002551, when part of this work took place in Mons, Belgium.

\appendix
\section{Convention and useful identities}
In the main text, we have raised and lowered spinorial indices with the following convention:
\begin{align}
    h^{\mu}_{\alpha\dot\alpha}h_{\mu\beta\dot\beta}=2\epsilon_{\alpha\beta}\epsilon_{\dot\alpha\dot\beta}\,,\quad v_{\alpha}=v^{\beta}\epsilon_{\beta\alpha}\,,\quad v^{\alpha}=v_{\beta}\epsilon^{\alpha\beta}\,,\quad v_{\dot\alpha}=v^{\dot\beta}\epsilon_{\dot\beta\dot\alpha}\,,\quad v^{\dot\alpha}=v_{\dot\beta}\epsilon^{\dot\alpha\dot\beta}\,.
\end{align}
Here, $\eps^{\alpha\beta}$ is the $\msp(2)$-invariant tensor  with the properties $\eps^{\alpha\beta}=\eps_{\alpha\beta}=-\eps^{\beta\alpha}$ and $\eps^{01}=1$. 
Furthermore, the following identities are useful
\begin{subequations}
\begin{align}
    h^{\alpha\dot\alpha}\wedge h^{\beta\dot\beta}&=+\frac{1}{2}H^{\alpha\beta}\eps^{\dot\alpha\dot\beta}+\frac{1}{2}\widetilde H^{\dot\alpha\dot\beta}\eps^{\alpha\beta}\,,\qquad \qquad &H^{\alpha\alpha}\wedge \widetilde H^{\dot\alpha\dot\alpha}&=0\,,\\
    H_{\alpha\alpha}\wedge H^{\alpha\alpha}&=-\widetilde H_{\dot\alpha\dot\alpha}\wedge \widetilde H^{\dot\alpha\dot\alpha}=\text{Vol}\,,\\
    h^{\alpha\dot\alpha}\wedge H^{\beta\beta}&=-\frac{1}{3}\eps^{\alpha\beta}\cH_3^{\beta\dot\alpha}\,,\qquad\qquad &h^{\alpha\dot\alpha}\wedge \widetilde H^{\dot\beta\dot\beta}&=+\frac{1}{3}\eps^{\dot\alpha\dot\beta}\cH_3^{\alpha\dot\beta}\,,
\end{align}
\end{subequations}
where
\begin{align}
     h^{\alpha}{}_{\dot\gamma}\wedge h^{\alpha\dot\gamma}=H^{\alpha\alpha}\,,\qquad \qquad  h_{\gamma}{}^{\dot\alpha}\wedge h^{\gamma\dot\alpha}=\widetilde H^{\dot\alpha\dot\alpha}\,,\qquad \qquad 
    h^{\alpha}{}_{\dot\alpha}\wedge H^{\dot\beta\dot\alpha}=\cH_3^{\alpha\dot\beta}\,.
\end{align}

\section{Strong homotopy algebras}\label{app:A}

This appendix provides some basic facts about homotopy algebras adapted to the notations used in the main text.

Recall that the $Q$-manifold $(\cN,Q)$ is a $\Z$-graded target space $\cN=\bigoplus_{n\in \Z}\cN_n$ whose homogeneous elements $a_n$ has degree $|a_n|=n$. Have we chosen to work with a degree $k$-shifted target space $\cN$, the corresponding subspaces would be
\begin{align}
    \cN[k]_n:=\cN_{n+k}\,.
\end{align}
Hereinafter, we use Koszul sign convention where $ab=(-)^{|a||b|}ba$ for two graded objects $(a,b)$ of degrees $|a|$ and $|b|$, respectively.

\paragraph{$A_{\infty}/L_{\infty}$-algebras.} Denote $C^{\bullet}(\cN)$ as the space of homomorphisms $\Hom_{\C}(T\cN,\cN)$, where $T\cN:=\bigoplus_{n=1}^{\infty}\cN^{\otimes n}$ is the tensor algebra associated with $\cN$, whose elements $\sF_m\in C^{m}(\cN)$ are $\sF_m$ maps with degree 1. We define the Gerstenhaber $\circ$-product as
\small
\begin{align}\label{eq:GH-product}
    \sF_n\circ \sF_m:=\sum_{k=0}^{n-1}(-)^{|\sF_m|(|\Phi_1|+\ldots+|\Phi_k|)}\sF_n(\Phi_1,\ldots,\Phi_k,\sF_m(\Phi_{k+1},\ldots,\Phi_{k+m}),\ldots,\Phi_{m+n-1})\,
\end{align}
\normalsize
where $\Phi$'s are elements of $\cN_0$ or $\cN_1$.

\begin{definition}\label{def:A-infinity} The pair $(\cN,\sF)$ where $\sF=\sum_{i= 1}^{\infty}\sF_i$ is called $A_{\infty}$-algebra if 
\begin{align}
    \sF\circ\sF=0\,.
\end{align}
\end{definition}
Following from the above definition of $A_{\infty}$-algebra, it is clear that $\sF_1^2=0$, and therefore $\sF_1$ defines a differential. The next structure map $\sF_2$ is a Leibniz-compatible binary operation. Then, the homotopy given by $\sF_3$ measures the failure of the associativity of $\sF_2$, and so on. For more detail, see e.g. \cite{Kajiura:2003ax}. 

Note that an $A_{\infty}$-algebra can induce an $L_{\infty}$-algebra defined by a pair ($\cN,\sL)$, where $\sL=\sum \sL_i$ with $|\sL|=1$ called $L_{\infty}$-structure maps with suitable symmetrization among the arguments, i.e.
\begin{align}
    \sL_n(\Phi_1,\ldots,\Phi_n)=\sum_{\sigma\in S_n}(-)^K\sF_n(\Phi_{\sigma(1)},\ldots,\Phi_{\sigma(n)})\,.
\end{align}
Here, $(-)^K$ is the Koszul sign resulting from $\sigma$ permutation under the symmetric group $S_n$. For instance,
\begin{align}
    \sL_n(\Phi_1,\ldots,\Phi_i,\Phi_{i+1},\ldots,\Phi_n)=(-)^{|\Phi_i||\Phi_{i+1}|}\sL_n(\Phi_1,\ldots,\Phi_{i+1},\Phi_i,\ldots,\Phi_n)\,.
\end{align}
As the reader may notice, $\sL$ can be identified with the homological vector field $Q$ on $\cN$. Furthermore, following from \eqref{eq:GH-product} and Definition \ref{def:A-infinity}, $\sL$'s must obey:
\begin{align}\label{generalized-Jacobi-L}
    \sum_{m+n-1=\text{const}}\sum_{\sigma\in S_n}(-)^K\sL_m(\sL_n(\Phi_{\sigma(1)},\ldots,\Phi_{\sigma(n)}),\Phi_{\sigma(n+1)},\ldots,\Phi_{\sigma(n+m-1)})=0\,.
\end{align}
As usual, $\sL_1\equiv d$ is the defining differential on $C^{\bullet}(\cN)$. Note that an $L_{\infty}$-algebra with $\sL_{n\geq3}=0$ is referred to as a dgLa. 

The $L_{\infty}$-structure maps $(\cV,\cU)$ for chiral HSGRA can be obtained by restricting the super coordinates $\Phi$'s to their appropriate components, i.e. to $\omega$'s or $\sC$'s. In particular, when there are strictly two $\omega$ arguments in $\sL$, then $\sL(\omega,\omega,\sC,\ldots,\sC)\equiv\cV(\omega,\omega,\sC,\ldots,\sC)$, and when there is only one $\omega$ we have the identification $\sL(\omega,\sC,\ldots,\sC)\equiv \cU(\omega,\sC,\ldots,\sC)$. 

Although $\cV$'s and $\cU$'s are $L_{\infty}$-structure maps, it will be more convenient to work with their sub-vertices which originate from a cyclic $A_{\infty}$-algebra where the signs and permutations can be kept in track in a simple manner, cf. \cite{Sharapov:2022faa}. This can be done by letting $\omega$ and $\sC$ take values in some matrix algebra $\Mat(n,\C)$ as alluded to in the main text. 
Furthermore, since $\cV$'s and $\cU$'s originated from the same $L_{\infty}$-algebra, we can relate them together through the following nature pairing.

\begin{definition}\label{def:cyclic-L-infinity} A degree-$2$ cyclic $L_{\infty}$-algebra is an $L_{\infty}$-algebra endowed with a non-degenerate $\C$-linear pairing $\langle -|-\rangle:\cN \otimes \cN^{\vee}[-1]\rightarrow \C[-2]$ such that
\begin{align}\label{K-pairing}
    \langle a|b\rangle=(-)^{(|a|+1)(|b|+1)}\langle b|a\rangle\,.
\end{align}
\end{definition}
From the above definition, we have for instance
\begin{subequations}\label{examples-of-traces}
\begin{align}
    \langle \sL_{n}(\Phi_1,\ldots,\Phi_n),\Phi_{n+1}\rangle&=(-)^{|\Phi_1|\big(\sum_{k=2}^{n+1}|\Phi_k|\big)+|\Phi_1|+|\Phi_{n+2}|}\langle \sL_n(\Phi_2,\ldots,\Phi_{n+2}),\Phi_1)\,,\\
    \langle \sL_n(\Phi_1,\ldots,\Phi_n)|\Phi_{n+1}\rangle&=(-)^{\sum_{k=1}^n|\Phi_k|}\langle \Phi_1|\sL_n(\Phi_2,\ldots,\Phi_{n+1})\rangle\,,
\end{align}
\end{subequations}
Upon restricting $\Phi$'s to $\omega$ and $\sC$, we obtain
\begin{subequations}\label{UV-duality-maps}
    \begin{align}
    \big\langle V(\omega,\sC,\ldots,\omega,\sC,\ldots,\sC)| C\big\rangle&=+\big\langle \omega|U(\sC,\ldots,\omega,\sC,\ldots,\sC)\big\rangle\,,\\
    \big\langle \sC| V(\sC,\ldots,\omega,\sC,\ldots,\sC,\omega)\big\rangle&=-\big\langle U(\sC,\ldots,\omega,\sC,\ldots,\sC)|\omega\big\rangle\,,
\end{align}
\end{subequations}
where $V$'s and $U$'s are sub-vertices originated from $A_{\infty}$-algebra of $\cV$ and $\cU$. The maps \eqref{UV-duality-maps} are called the $\cV$-$\cU$ duality maps \cite{Sharapov:2022faa,Sharapov:2022awp,Sharapov:2022nps}, which can be used to extract $\cU$'s vertices from $\cV$'s (at a fixed order $n$), and vice versa. 






\footnotesize
\bibliography{Blackhole.bib}
\bibliographystyle{JHEP-2}

\end{document}